\definecolor{pennblue}{cmyk}{1,.65,0,.3}
\definecolor{pennred}{cmyk}{0,1,.65,.34}
\newcommand{\smartparagraph}[1]{\noindent{\bf #1}\ }
\newcommand{\eg}{{\it e.g.\xspace}}
\newcommand{\ie}{{\it i.e.\xspace}}
\newcommand{\eat}[1]{}
\newcommand{\schedulingheuristic}{{\sc ez-Schedule}\xspace}
\definecolor{shadecolor}{rgb}{0.9,0.9,0.9}
\definecolor{heraldBlue}{rgb}{0.0,0.0,0.8}
\definecolor{heraldRed}{rgb}{0.8,0.0,0.0}
\definecolor{heraldGray}{rgb}{0.4,0.4,0.4}
\definecolor{heraldBlack}{rgb}{0.0,0.0,0.0} 
\definecolor{heraldGreen}{rgb}{0.0,0.4,0.0} 
\def\NOTES{0}
\def\TR{1}
\newcommand{\mcnote}[1]{\textcolor{heraldBlue}{\small \bf [MC: #1]}}
\newcommand{\tnnote}[1]{\textcolor{heraldRed}{\small \bf [TN: #1]}}
\newcommand{\ed}[1]{\textsf{\textcolor{red}{[mch: #1]}}}
\newcommand{\spacebeforeparagraph}{\vspace{.1in}}
\newcommand{\mcnote}[1]{}
\newcommand{\tnnote}[1]{}
\newcommand{\ed}[1]{}
\newcommand{\spacebeforeparagraph}{}
\tikzset{
	every label/.style={draw=none, inner sep=0mm},
	neutral/.style={draw=black},
	inf/.style={draw=black, fill=black!30!red},
	vertex/.style={circle, draw=black!70, fill=white, thick,inner sep=0pt,minimum size=4pt},
	controller/.style={circle, draw=blue!50, fill=yellow!20, thick,inner sep=2.5pt,minimum size=4pt},
	resource/.style={rectangle, draw=black, fill=white, thick,inner sep=4pt,minimum size=4pt},
	move/.style={circle, draw=black!70, fill=yellow!20, thick,inner sep=4pt,minimum size=4pt},
	sus/.style={draw=black, fill=black!30!yellow},
	killed/.style={draw=black, fill=gray!70},
	det/.style={draw=black, fill=green!30},
	cutAction/.style={draw=black, fill=white},
	act/.style={-},
	cut/.style={-, thick, color=brown},
	ctrl/.style={-,dashed},
	att/.style={->, very thick, color=black!30!red},
	cross/.style={cross out, color=brown, draw,-,very thick},
	deact/.style={->, ultra thick, color=gray},
	set arrow inside/.code={\pgfqkeys{/tikz/arrow inside}{#1}},
	set arrow inside={end/.initial=>, opt/.initial=},
	/pgf/decoration/Mark/.style={
		mark/.expanded=at position #1 with
		{
			\noexpand\arrow[\pgfkeysvalueof{/tikz/arrow inside/opt}]{\pgfkeysvalueof{/tikz/arrow inside/end}}
		}
	},
	arrow inside/.style 2 args={
		set arrow inside={#1},
		postaction={
			decorate,decoration={
				markings,Mark/.list={#2}
			}
		}
	},
	moveblank/.style={draw=white,
		regular polygon,
		regular polygon sides=3,
		fill=white,
		thick,
		node distance=1pt,
		minimum height=1pt,
		inner sep=1pt,minimum size=2pt}
}
\newcommand{\ezsegway}{ez-Segway\xspace}
\newcommand{\dionysus}{Dionysus\xspace}
\newcommand{\MsgInstallUpdate}{\texttt{InstallUpdate}\xspace}
\newcommand{\MsgRemoving}{\texttt{Removing}\xspace}
\newcommand{\MsgGoodToMove}{\texttt{GoodToMove}\xspace}
\newcommand{\MsgCoherent}{\texttt{Coherent}\xspace}
\newcommand{\NetworkConfig}{\ensuremath{\mathbb{C}}}
\newcommand{\NwSwitch}{\ensuremath{s}}
\newcommand{\Link}[2]{\ensuremath{\ell_{#1,#2}}}
\newcommand{\LinkSet}{\ensuremath{\mathbb{L}}}
\newcommand{\LinkSetDependent}{\ensuremath{L}}
\newcommand{\Packet}{\ensuremath{\mathit{pk}}}
\newcommand{\Tag}{\ensuremath{tag}}
\newcommand{\nextSW}[3]{\ensuremath{next_{#1,#2}(#3)}}
\newcommand{\Flow}{\ensuremath{\mathit{F}}}
\newcommand{\Traffic}{\ensuremath{v}}
\newcommand{\MovePathSet}{\ensuremath{\Pi}}
\newcommand{\MovePath}{\ensuremath{\pi}}
\newcommand{\Before}{\ensuremath{old}}
\newcommand{\NwSwitchSet}{\ensuremath{\mathbb{S}}}
\newcommand{\NwPairSwitchSet}{\ensuremath{\mathbb{P}}}
\newcommand{\DependenceGraph}[4]{\ensuremath{\mathbb{G}(#1,#2,#3,#4)}}
\newcommand{\Null}{\bot}
\newcommand{\SwInit}{\ensuremath{init}}
\newcommand{\basicmigration}{\textsc{Basic-Update}\xspace}
\newcommand{\segmentedmigration}{\textsc{Segmented-Update}\xspace}
\newcommand{\bfs}{BFS\xspace}
\newtheorem{theorem}{Theorem}
\newtheorem{lemma}{Lemma}
\newfont{\mycrnotice}{ptmr8t at 7pt}
\newfont{\myconfname}{ptmri8t at 7pt}
\begin{document}

\newfont{\thisttlfnt}{phvb8t at 18pt}

\if \TR 1
  \title{Decentralized Consistent Network Updates in SDN\\ with ez-Segway}
\else
  \title{Decentralized Consistent Updates in SDN}
\fi
\numberofauthors{3}
\author{
        \alignauthor Thanh Dang Nguyen\thanks{Work performed at Universit\'e catholique de Louvain.}\\
        \affaddr{University of Chicago}
        \alignauthor Marco Chiesa\\
        \affaddr{\mbox{Universit\'e catholique de Louvain}}
        \alignauthor Marco Canini\footnotemark[1]\\
	\affaddr{KAUST}
	}
\date{}

\maketitle
\pagestyle{empty}
\begin{abstract}

We present {\em ez-Segway}, a decentralized mechanism to consistently and
quickly update the network state while preventing forwarding anomalies (loops
and black-holes) and avoiding link congestion. In our design, the centralized
SDN controller only pre-computes information needed by the switches during the
update execution. This information is distributed to the switches, which use
partial knowledge and direct message passing to efficiently realize the update.
This separation of concerns has the key benefit of improving update performance
as the communication and computation bottlenecks at the controller are removed.
Our evaluations via network emulations and large-scale simulations 
demonstrate the efficiency of \ezsegway, which compared to a centralized 
approach, improves network update times by up to $45\%$ and $57\%$ at 
the median and the  $99^{th}$ percentile, respectively. 
A deployment of a system 
prototype 
in a real
OpenFlow switch and an implementation in P4 demonstrate the feasibility and low
overhead of implementing simple network update functionality within switches.

\end{abstract}\if \TR 0
\begin{CCSXML}
	<ccs2012>
	<concept>
	<concept_id>10003033.10003083.10003094</concept_id>
	<concept_desc>Networks~Network dynamics</concept_desc>
	<concept_significance>500</concept_significance>
	</concept>
	<concept>
	<concept_id>10003033.10003039.10003040</concept_id>
	<concept_desc>Networks~Network protocol design</concept_desc>
	<concept_significance>500</concept_significance>
	</concept>
	<concept>
	<concept_id>10003033.10003083.10003098</concept_id>
	<concept_desc>Networks~Network manageability</concept_desc>
	<concept_significance>500</concept_significance>
	</concept>
	</ccs2012>
\end{CCSXML}

\ccsdesc[500]{Networks~Network dynamics}
\ccsdesc[500]{Networks~Network protocol design}
\ccsdesc[500]{Networks~Network manageability}

\printccsdesc
\fi

\section{Introduction}

Updating data plane state to adapt to dynamic conditions is a fundamental
operation in all centrally-controlled networks. Performing updates as quickly as
possible while preserving certain consistency properties (like loop, black-hole
or congestion freedom) is a common challenge faced in many recent SDN systems
(e.g.,~\cite{b4, hong13, Levin.ATC14, Soule.CONEXT14}).

Network updates are inherently challenging because rule-update operations happen
across unsynchronized devices and the consistency properties impose dependencies
among operations that must be respected to avoid forwarding anomalies and
worsened network performance~\cite{Jin+DSN+2014,survey-network-updates}. Yet,
performing updates as fast as possible is paramount in a variety of scenarios
ranging from performance to fault tolerance to
security~\cite{Mahajan.HotNets.2013, Jin+DSN+2014, Peresini.HOTSDN14} and is a
crucial requirement for ``five-nines'' availability of carrier-grade and mobile
backhaul networks~\cite{time4-infocom-16}.

Ideally, a network would have the capability to instantaneously update its
network-wide state while preserving consistency. Since updates cannot be applied
at the exact same instant at all switches, recent work~\cite{time4-infocom-16}
explored the possibility of leveraging clock-synchronization techniques to
minimize the transition time. This requires just a \emph{single round of
communication} between the controller and switches. However, with this
``all-in-one-shot'' update style, packets in flight at the time of the change
can violate consistency properties due to the complete lack of
coordination~\cite{Jin+DSN+2014} or are dropped as a result of scheduling errors
due to clock synchronization imprecisions~\cite{time4-infocom-16}.

The bulk of previous work in this area~\cite{Reitblatt+ANU+2012,
Peresini.HOTSDN14, katta2013incremental, Jin+DSN+2014, Liu+ZUD+2013,
McClurg.PLDI15, Ludwig+PODC2015} focused on \emph{maintaining consistency
properties}. However, all these approaches require multiple rounds of
communications between the controller (which drives the update) and the switches
(which behave as remote passive nodes that the controller writes state to and is
notified from). This controller-driven update process has four important
drawbacks:

First, because the controller is involved with the
installation of every rule, {\em the update time is inflated by inherent delays}
affecting communication between controller and switches. As a result, even with
state-of-the-art approaches~\cite{Jin+DSN+2014}, a network update typically
takes seconds to be completed (results show $99^{th}$ percentiles as high as 4
seconds).

Second, because scheduling updates is computationally
expensive~\cite{Jin+DSN+2014, Peresini.HOTSDN14, McClurg.PLDI15}, {\em the
update time is slowed down by a centralized computation}.

Third, because the controller can only react to network dynamics (\eg,
congestion) at control-plane timescales, {\em the update process cannot 
quickly adapt to current data plane conditions}.

Fourth, in real deployments where the controller is distributed and
switches are typically sharded among controllers, network \emph{updates require
additional coordination overhead among different controllers}. For wide-area
networks, this additional synchronization can add substantial
latency~\cite{ICONA}.

In this paper, we present ez-Segway, a new mechanism for network updates. Our
key insight is to involve the switches as active participants in achieving fast
and consistent updates. The controller is responsible for computing the intended
network configuration, and for identifying ``flow segments'' (parts of an update
that can be updated independently and in parallel) and dependencies among
segments. The update is realized in a \emph{decentralized} fashion by the
switches, which execute a network update by scheduling update operations based
on the information received by the controller and messages passed among
neighboring switches. This allows every switch to update its local forwarding
rules as soon as the update dependencies are met (\ie, when a rule can only be
installed after dependent rules are installed at other switches), without any
need to coordinate with the controller.

Being decentralized, our approach differs significantly from prior work. It
achieves the main benefit of clock-synchronization mechanisms, which incur only
a single round of controller-to-switches communication, with the guarantees
offered by coordination to avoid forwarding anomalies and congestion. To the
best of our knowledge, we are the first to explore the benefits of delegating
network updates' coordination and scheduling functionalities to the switches.
Perhaps surprisingly, we find that the required functionality can be readily
implemented in existing programmable switches (OpenFlow and P4~\cite{p4}) with
low overhead.

Preventing deadlocks is an important challenge of our update mechanism. To
address this problem, we develop two techniques: (i) \emph{flow segmentation},
which allows us to update different flow segments independently of one another,
thus reducing dependencies, and (ii) \emph{splitting volume}, which divides a
flow's traffic onto its old and new paths, thus preventing link congestion.

We show that our approach leads to faster network updates, reduces the number of
exchanged messages in the network, and has low complexity for the scheduling
computation. Compared to state-of-the-art centralized approaches
(\eg, Dionysus~\cite{Jin+DSN+2014}) configured with the most favorable controller
location in the network,  ez-Segway improves network update time by up to 
$45\%$ and $57\%$ at the median and the  $99^{th}$ percentile, respectively,
and it reduces message overhead by $65\%$. To put these results into
perspective, consider that Dionysus improved the network update time by up to a factor
of $2$ over the previous state-of-the-art technique, SWAN~\cite{hong13}.
Overall, our results show significant update time reductions, with speed of light
becoming the main limiting factor.

Our contributions are as follows:\\
	$\bullet$ We present \ezsegway (\S\ref{sec:algo}),
	a consistent update scheduling mechanism that runs on switches,
	initially coordinated by a centralized SDN controller.\\
	$\bullet$ We assess our prototype implementation
	(\S\ref{sec:implementation}) by running a comprehensive set of emulations
	(\S\ref{sec:evaluation}) and simulations (\S\ref{sec:simulation}) on various topologies and traffic patterns.\\
	$\bullet$ We validate feasibility (\S\ref{sec:micro-benchmark}) by running
	our system prototype on a real SDN switch and present microbenchmarks that
	demonstrate low computational overheads. Our prototype
	is available as open source at\\ {\footnotesize \url{https://github.com/thanh-nguyen-dang/ez-segway}}.

\section{Network Update Problem}
\label{sec:problem}

We start by formalizing the network update problem and the properties we are
interested in. The network consists of switches $\NwSwitchSet{=}\{\NwSwitch_i\}$
and directed links $\LinkSet{=}\{\Link{i}{j}\}$, in which $\Link{i}{j}$ connects
$s_i$ to $s_j$ with a certain capacity.

\smartparagraph{Flow modeling.}
We use a standard model for characterizing flow traffic volumes as
in~\cite{hong13, Jin+DSN+2014,survey-network-updates}. A flow $\Flow$ is 
an aggregate of packets
between an ingress switch and an egress switch. Every flow is associated with a
{\em traffic volume} $\Traffic_{\Flow}$. In practice, this volume could be an
estimate that the controller computes by periodically gathering switch
statistics~\cite{Jin+DSN+2014,survey-network-updates} or based on an 
allocation of bandwidth
resources~\cite{b4,survey-network-updates}. The {\em forwarding state} of a 
flow consists of an exact
match rule that matches all packets of the flow. As in previous
work~\cite{Jin+DSN+2014}, we assume that flows can be split among paths by means
of weighted load balancing schemes such as WCMP or OpenFlow-based approaches
like Niagara~\cite{Kang.CoNEXT15}.

\smartparagraph{Network configurations and updates.}
A \emph{network configuration} $\NetworkConfig$ is the collection of all
forwarding states that determine what packets are forwarded between any pair of
switches and how they are forwarded (\eg, match-action flow table rules in
OpenFlow). Given two network configurations $\NetworkConfig, \NetworkConfig'$, a
{\em network update} is a process that replaces the current network
configuration $\NetworkConfig$ by the target one $\NetworkConfig'$.

\smartparagraph{Properties.}
We focus on these three properties of network updates: $(i)$ {\em black-hole
freedom}: No packet is unintendedly dropped in the network; $(ii)$ {\em
loop-freedom}: No packet should loop in the network; $(iii)$ {\em
congestion-freedom}: No link should be loaded with a traffic greater than its
capacity. These properties are the same as the ones studied
in~\cite{Mahajan.HotNets.2013,survey-network-updates}.

\smartparagraph{Update operations.}
Due to link capacity limits and the inherent difficulty in synchronizing the
changes at different switches, the link load during an update could get
significantly higher than that before or after the update and all flows of a
configuration cannot be moved at the same time. Thus, to minimize disruptions,
it is necessary to decompose a network update into a set of {\em update
operations} $\MovePathSet$.

Intuitively, an update operation $\MovePath$ denotes the operation necessary to
move a flow $\Flow$ from the old to the new configuration: in the context of a
single switch, this refers to the addition or deletion of $\Flow$'s forwarding
state for that switch.

Thus, unlike per-packet consistent updates~\cite{Reitblatt+ANU+2012}, we allow a
flow to be routed through a mix of the old and new configuration, unless other
constraints make it impossible (\eg, a service chain of middle-boxes must be
visited in reversed order in the two configurations).

\smartparagraph{Dependency graph.}
To prevent a violation of our properties, update operations are constrained in
the order of their execution. These dependencies can be described with the help
of a {\em dependency graph}, which will be  explained in detail in
\S\ref{sec:algo}. At any time, only the update operations whose dependencies are
met in the dependency graph are  possibly executed. That leads the network to
transient intermediate configurations. The {\em update is successful} when the
network is transformed from the current configuration $\NetworkConfig$ to the
target configuration $\NetworkConfig'$ such that for all intermediate
configurations, the aforementioned three properties are preserved.
\section{Overview}
\label{sec:overview}

Our goal is to improve the performance of updating network state while avoiding
forwarding loops, black-holes, and congestion. In contrast with prior approaches
-- all of which use a controller to plan and coordinate the updates -- we
explore the question: can the network update problem be solved in a
decentralized fashion wherein switches are delegated the task of implementing
network updates?
 
\subsection{Decentralizing for fast updates}

Consider the example of seven switches $s_1, \cdots, s_7$ shown in
Figure~\ref{fig.big-network}. Assume each link has $10$ units of capacity and
there are three flows $\Flow_R,\Flow_G,\Flow_B$, each of size 5. This means that
every link can carry at most $2$ flows at the same time. We denote a path
through a sequence of nodes $s1,\dots,s_n$ by $(s1\dots s_n)$.

\begin{figure}[ptb]
	\centering
	\scalebox{0.8}{
		\begin{tikzpicture}[scale=1.6]
			\label{fig.notation-flow}
			\node[above] at (0,0.8) {$F_R{:}5$};
			\draw[red!75!white,thick] plot [smooth,tension=0.5] coordinates { (0.25, 0.95) (0.6, 0.95) } [arrow inside={end=stealth,opt={red!75!white,scale=2}}{0.99}];
			
			\node[above] at (0,0.5) {$F_G{:}5$};
			\draw[green!50!black,thick] plot [smooth,tension=0.5] coordinates { (0.25, 0.65) (0.6, 0.65) } [arrow inside={end=stealth,opt={green!50!black,scale=2}}{0.99}];
			
			\node[above] at (0,0.2) {$F_B{:}5$};
			\draw[blue,thick] plot [smooth,tension=0.5] coordinates { (0.25, 0.35) (0.6, 0.35) } [arrow inside={end=stealth,opt={blue,scale=2}}{0.99}];
			\end{tikzpicture}
		}~
		\subfloat[\scriptsize{Configuration $\NetworkConfig_1$}]{
			\label{fig.c1}
			\scalebox{0.95}{
				\begin{tikzpicture}[scale=0.75]
					\node [vertex] (1) at (-0.5,0) {$s_1$};
					\node [vertex] (2) at (0.5,0) {$s_2$};
					\node [vertex] (3) at (1.5,0) {$s_3$};
					\node [vertex] (4) at (2.5,0) {$s_4$};
					\node [vertex] (5) at (3.5,0) {$s_5$};
					\node [vertex] (6) at (1,1) {$s_6$};
					\node [vertex] (7) at (2,1) {$s_7$};
					\draw[act] (1) to (2);
					\draw[act] (2) to (3);
					\draw[act] (2) to (6);
					\draw[act] (3) to (4);
					\draw[act] (3) to (6);
					\draw[act] (3) to (7);
					\draw[act] (4) to (5);
					\draw[act] (4) to (7);
					
					\draw[blue, thick] plot [smooth,tension=0.5] coordinates { (-0.4,0.2) (0.5,0.2) (0.8,1.1) (1.15,1.15) (1.5,0.2) (2.5,0.2) } [arrow inside={end=stealth,opt={blue,scale=2}}{0.99}];
					
					\draw[green!40!black, thick] plot [smooth,tension=0.6] coordinates { (0.68,0.1) (1,0.7) (1.32,0.1) } [arrow inside={end=stealth,opt={green!40!black,scale=2}}{0.99}];
		
					\draw[red!75!white, thick] plot [smooth,tension=0.5]  coordinates { (0.65,0.15) (1.5,0.2) (1.8,1.1) (2.15,1.15) (2.5,0.2) (3.5, 0.2)} [arrow inside={end=stealth,opt={red!50!white,scale=2}}{0.99}];
				\end{tikzpicture}
			}
		}
		\subfloat[\scriptsize{Configuration $\NetworkConfig_1'$}]{
			\label{fig.c2}
			\scalebox{0.95}{
				\begin{tikzpicture}[scale=0.75]
					\node [vertex] (1) at (-0.5,0) {$s_1$};
					\node [vertex] (2) at (0.5,0) {$s_2$};
					\node [vertex] (3) at (1.5,0) {$s_3$};
					\node [vertex] (4) at (2.5,0) {$s_4$};
					\node [vertex] (5) at (3.5,0) {$s_5$};
					\node [vertex] (6) at (1,1) {$s_6$};
					\node [vertex] (7) at (2,1) {$s_7$};
					\draw[act] (1) to (2);
					\draw[act] (2) to (3);
					\draw[act] (2) to (6);
					\draw[act] (3) to (4);
					\draw[act] (3) to (6);
					\draw[act] (3) to (7);
					\draw[act] (4) to (5);
					\draw[act] (4) to (7);
					
					\draw[red!75!white, thick] plot [smooth,tension=0.5]  coordinates { (0.5,0.2) (0.8,1.1) (1.15,1.15) (1.5,0.25) (2.5, 0.2) (3.5, 0.2)} [arrow inside={end=stealth,opt={red!50!white,scale=2}}{0.99}];
					
					\draw[green!40!black, thick] plot [smooth,tension=0.6] coordinates { (0.68,0.1) (1,0.7) (1.32,0.1) } [arrow inside={end=stealth,opt={green!40!black,scale=2}}{0.99}];

					\draw[blue, thick] plot [smooth,tension=0.5] coordinates { (-0.4,0.2) (1.45,0.25) (1.8,1.1) (2.15,1.15) (2.5,0.2) } [arrow inside={end=stealth,opt={blue,scale=2}}{0.99}];

				\end{tikzpicture}
			}
		}
	\caption{An example network update. Three flows $\Flow_R,\Flow_G,\Flow_B$ are to be updated in the new configuration $\NetworkConfig_1'$. Update operations must be ordered carefully in order to preserve consistency and avoid congestion.}
	\label{fig.big-network}
\end{figure}
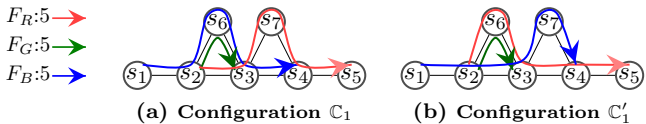
The network configuration needs to be updated from $\NetworkConfig_1$ to
$\NetworkConfig_1'$. Note that we cannot simply transition to
$\NetworkConfig_1'$ by updating all the switches at the same time. Since
switches apply updates at different times, such a strategy can neither ensure
congestion freedom nor loop- and black-hole- freedom. For example, if $s_2$ is
updated to forward $\Flow_R$ on link $\Link{2}{6}$ before the forwarding state
for $\Flow_R$ is installed at $s_6$, this results in a temporary black-hole.
Moreover, if $s_2$ forwards $\Flow_R$ on link $\Link{2}{6}$ before $\Flow_B$ is
moved to its new path, then $\Link{2}{6}$ becomes congested.

Ensuring that the network stays congestion free and that the consistency of
forwarding state is not violated requires us to carefully plan the order of
update operations across the switches. In particular, we observe that certain
operations depend on other operations, leading potentially to long chains of
dependencies for non trivial updates.
This fact implies that a network update
can be slowed down by two critical factors: $(i)$ the amount of computation
necessary to find an appropriate update schedule, and $(ii)$ the inherent
latencies affecting communication between controller and switches summed over
multiple rounds of communications to respect operation dependencies.

But is it necessary to have the controller involved at every step of a network
update? We find that it is possible to achieve fast and consistent updates by
minimizing controller involvement and delegating to the switch the tasks of
scheduling and coordinating the update process.

We leverage two main insights to achieve fast, consistent updates. Our first
insight is that we can complete an update faster by using \emph{in-band
messaging} between switches instead of coordinating the update at the
controller, which pays the costs of higher latency. Our second insight is that
it is not always necessary to move a flow as a whole. We can complete the update
faster by using \emph{segmentation}, wherein different ``segments'' of a flow
can be updated in parallel. For example, in Figure~\ref{fig.big-network}, both
flows $\Flow_R$ and $\Flow_B$ can be decomposed in two independent parts: the
first between $s_2$ and $s_3$, and the second between $s_3$ and $s_4$.

\smartparagraph{Distributed update execution.}
Before we introduce our approach in detail, we illustrate the execution of a
decentralized network update for the example of Figure~\ref{fig.big-network}.
Initially, the controller sends to every switch a message containing the current
configuration $\NetworkConfig_1$, and target configuration
$\NetworkConfig_1'$.\footnote{We revisit later what the controller sends
precisely.} This information allows every switch to compute \emph{what}
forwarding state to update (by knowing which flows traverse it and their sizes)
as well as \emph{when} each state update should occur (by obeying operation
dependencies while coordinating with other switches via in-band messaging).

In the example, switch $s_2$ infers that link $\Link{2}{3}$ has enough capacity
to carry $\Flow_B$ and that its next hop switch, $s_3$, is already capable of
forwarding $\Flow_B$ (because the flow traverses it in both the old and new
configuration). Hence, $s_2$ updates its forwarding state so as to move
$\Flow_B$ from path $(s_2s_6s_3)$ to $(s_2s_3)$. It then notifies $s_6$ about
the update of $\Flow_B$, allowing $s_6$ to safely remove the forwarding state
corresponding to this flow.

Once notified by $s_2$, $s_6$ infers that link $\Link{6}{3}$ now has available
capacity to carry $\Flow_R$. So, it installs the corresponding forwarding state
and notifies $s_2$, which is the upstream switch on the new path of $\Flow_R$.
Then, $s_2$ infers that the new downstream switch is ready and that
$\Link{2}{6}$ has enough capacity, so it moves $\Flow_R$ to its new 
path.

Similarly, $s_3$ updates its forwarding state for $\Flow_R$ to flow on
$\Link{3}{4}$, notifying $s_7$ about the update. Meanwhile, switch $s_7$ infers
that link $\Link{7}{4}$ has enough capacity to carry $\Flow_B$; then, it
installs forwarding state for $\Flow_B$ and notifies $s_3$. Then $s_3$ moves
$\Flow_B$ onto $\Link{3}{7}$.

Notice that several update operations can run in parallel at multiple switches.
However, whenever operations have unsatisfied dependencies, switches must
coordinate. In this example, the longest dependency chain involves the three
operations that must occur in sequence at $s_2$, $s_6$, and $s_2$ again.
Therefore, the above execution accumulates the delay for the initial message
from the controller to arrive at the switches plus a round-trip delay between
$s_2$ and $s_6$. In contrast, if a centralized controller performed the update
following the same schedule, the update time would be affected by the sum of
three round-trip delays (two to $s_2$ and one to $s_6$). We aim to replace the
expensive communication (in terms of latency) between the switches and the
controller with simple and fast coordination messages among neighboring
switches.

\subsection{Dealing with deadlocks}

As in previous work on network update problems, it is worth asking: is it always
possible to reach the target configuration while preserving all the desired
consistency properties and resource constraints during the transition?
Unfortunately, similarly to the centralized case~\cite{Jin+DSN+2014}, some
updates are not feasible: that is, even if the target configuration do not
violate any of the three properties, there exists no ordering of update
operations to reach the target. For example, in
Figure~\ref{fig.big-network-segment-deadlock}, moving first either $\Flow_B$ or
$\Flow_R$ creates congestion.

Assume that a feasible update ordering exists. Then, will a decentralized
approach always be able to find it? Unfortunately, without computing a possible
schedule in advance~\cite{McClurg.PLDI15}, inappropriate ordering can lead to
deadlocks where no further progress can be made. However, as discussed
in~\cite{Jin+DSN+2014}, computing a feasible schedule is a computationally 
hard problem.

In practice, even in centralized settings, the current state-of-the-art
approach, Dionysus~\cite{Jin+DSN+2014}, cannot entirely avoid deadlocks. 
In such cases, Dionysus reduces flow rates to continue an
update without violating the consistency properties; however, this comes at 
the cost of lower network throughput.

Deadlocks pose an important challenge for us as a decentralized approach is
potentially more likely to enter deadlocks due to the lack of global
information. To deal with deadlocks, we develop two techniques that avoid
reducing network throughput (without resorting to rate-limit techniques~\cite{Jin+DSN+2014}
or reserving some capacities for the update operations~\cite{hong13}).

Our first technique is \emph{splitting volume}, which divides a flow's traffic
onto its old and new paths to resolve a deadlock. The second technique is
\emph{segmentation}, which allows to update different flow ``segments''
independently of one another. As we discussed, segmentation allows an update to
complete faster via parallelization. In this case, it also helps to resolve
certain deadlocks. Before presenting our techniques in detail
(\S\ref{sec:algo}), we illustrate them with intuitive examples.

\begin{figure}[ptb]
	\centering
		\scalebox{0.8}{
			\begin{tikzpicture}[scale=1.6]
			\label{fig.notation-flow-2}			
			\node[above] at (0,0.8) {$F_R{:}5$};
			\draw[red!75!white,thick] plot [smooth,tension=0.5] coordinates { (0.25, 0.95) (0.6, 0.95) } [arrow inside={end=stealth,opt={red!75!white,scale=2}}{0.99}];
			
			\node[above] at (0,0.5) {$F_G{:}5$};
			\draw[green!50!black,thick] plot [smooth,tension=0.5] coordinates { (0.25, 0.65) (0.6, 0.65) } [arrow inside={end=stealth,opt={green!50!black,scale=2}}{0.99}];
			
			\node[above] at (0,0.2) {$F_B{:}5$};
			\draw[blue,thick] plot [smooth,tension=0.5] coordinates { (0.25, 0.35) (0.6, 0.35) } [arrow inside={end=stealth,opt={blue,scale=2}}{0.99}];
			\end{tikzpicture}
		}~
		\subfloat[\scriptsize{Configuration $\NetworkConfig_2$}]{
			\label{fig.c2-segment-deadlock}
			\scalebox{0.95}{
				\begin{tikzpicture}[scale=0.75]
				\node [vertex] (1) at (-0.5,0) {$s_1$};
				\node [vertex] (2) at (0.5,0) {$s_2$};
				\node [vertex] (3) at (1.5,0) {$s_3$};
				\node [vertex] (4) at (2.5,0) {$s_4$};
				\node [vertex] (5) at (3.5,0) {$s_5$};
				\node [vertex] (6) at (1,1) {$s_6$};
				\node [vertex] (7) at (2,1) {$s_7$};
				\draw[act] (1) to (2);
				\draw[act] (2) to (3);
				\draw[act] (2) to (6);
				\draw[act] (3) to (4);
				\draw[act] (3) to (6);
				\draw[act] (3) to (7);
				\draw[act] (4) to (5);
				\draw[act] (4) to (7);
				
				\draw[green!40!black, thick] plot [smooth,tension=0.5] coordinates { (0.5, -0.2) (1.5, -0.2) (2.5, -0.2) } [arrow inside={end=stealth,opt={green!40!black,scale=1.5}}{0.99}];
				
				\draw[blue, thick] plot [smooth,tension=0.5] coordinates { (-0.4,0.2) (0.5,0.2) (0.8,1.1) (1.15,1.15) (1.5,0.2) (2.5,0.2) } [arrow inside={end=stealth,opt={blue,scale=2}}{0.99}];
				
				\draw[red!75!white, thick] plot [smooth,tension=0.5]  coordinates { (0.65,0.15) (1.5,0.2) (1.8,1.1) (2.15,1.15) (2.5,0.2) (3.5, 0.2)} [arrow inside={end=stealth,opt={red!50!white,scale=2}}{0.99}];
				\end{tikzpicture}
			}
		}
		\subfloat[\scriptsize{Configuration $\NetworkConfig_2'$}]{
			\label{fig.c3-segment-deadlock}
			\scalebox{0.95}{
				\begin{tikzpicture}[scale=0.75]
				\node [vertex] (1) at (-0.5,0) {$s_1$};
				\node [vertex] (2) at (0.5,0) {$s_2$};
				\node [vertex] (3) at (1.5,0) {$s_3$};
				\node [vertex] (4) at (2.5,0) {$s_4$};
				\node [vertex] (5) at (3.5,0) {$s_5$};
				\node [vertex] (6) at (1,1) {$s_6$};
				\node [vertex] (7) at (2,1) {$s_7$};
				\draw[act] (1) to (2);
				\draw[act] (2) to (3);
				\draw[act] (2) to (6);
				\draw[act] (3) to (4);
				\draw[act] (3) to (6);
				\draw[act] (3) to (7);
				\draw[act] (4) to (5);
				\draw[act] (4) to (7);
				
				\draw[green!40!black, thick] plot [smooth,tension=0.5] coordinates { (0.5, -0.2) (1.5, -0.2) (2.5, -0.2) } [arrow inside={end=stealth,opt={green!40!black,scale=1.5}}{0.99}];
				
				\draw[blue, thick] plot [smooth,tension=0.5] coordinates { (-0.4,0.2) (1.45,0.25) (1.8,1.1) (2.15,1.15) (2.5,0.2) } [arrow inside={end=stealth,opt={blue,scale=2}}{0.99}];
				
				\draw[red!75!white, thick] plot [smooth,tension=0.5]  coordinates { (0.5,0.2) (0.8,1.1) (1.15,1.15) (1.5,0.25) (2.5, 0.2) (3.5, 0.2)} [arrow inside={end=stealth,opt={red!50!white,scale=2}}{0.99}];
				\end{tikzpicture}
			}
		}
	\caption{An update with segment deadlock.}
	\label{fig.big-network-segment-deadlock}
\end{figure}
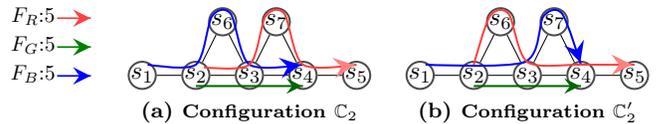
\smartparagraph{Segmentation example.}
Consider the example in Figure~\ref{fig.big-network-segment-deadlock},
where every flow has size 5. This case presents a deadlock that prevents flows
from being updated all at once. In particular, if we first move $\Flow_R$, link
$\Link{3}{4}$ becomes congested. Similarly, if we first move $\Flow_B$, link
$\Link{2}{3}$ is then congested.

We resolve this deadlock by segmenting these flows as:
$\Flow_R{=}\{\NwSwitch_2{\ldots}\NwSwitch_3, \NwSwitch_3{\ldots}\NwSwitch_4,
\NwSwitch_4{\ldots}\NwSwitch_5\}$, $\Flow_B{=}\{\NwSwitch_1{\ldots}\NwSwitch_2,
\NwSwitch_2{\ldots}\NwSwitch_3,$ $\NwSwitch_3{\ldots}\NwSwitch_4\}$.
Then, switches $\NwSwitch_2$ and $\NwSwitch_6$ coordinate to first move segment
$\{\NwSwitch_2{\ldots}\NwSwitch_3\}$ of $\Flow_R$ followed by the same segment of
$\Flow_B$. Independently, switches $\NwSwitch_3$ and $\NwSwitch_7$ move segment
$\{\NwSwitch_3{\ldots}\NwSwitch_4\}$ of $\Flow_B$ and $\Flow_R$, in this order.

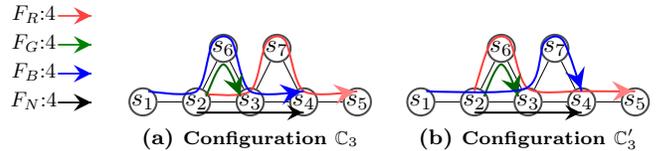
\begin{figure}[ptb]
	\centering
	\scalebox{0.8}{
		\begin{tikzpicture}[scale=1.6]
		\label{fig.notation-flow-3}
		\node[above] at (0,0.8) {$F_R{:}4$};
		\draw[red!75!white,thick] plot [smooth,tension=0.5] coordinates { (0.25, 0.95) (0.6, 0.95) } [arrow inside={end=stealth,opt={red!75!white,scale=2}}{0.99}];
		
		\node[above] at (0,0.5) {$F_G{:}4$};
		\draw[green!50!black,thick] plot [smooth,tension=0.5] coordinates { (0.25, 0.65) (0.6, 0.65) } [arrow inside={end=stealth,opt={green!50!black,scale=2}}{0.99}];
		
		\node[above] at (0,0.2) {$F_B{:}4$};
		\draw[blue,thick] plot [smooth,tension=0.5] coordinates { (0.25, 0.35) (0.6, 0.35) } [arrow inside={end=stealth,opt={blue,scale=2}}{0.99}];
		
		\node[above] at (0,-0.1) {$F_N{:}4$};
		\draw[black,thick] plot [smooth,tension=0.5]  coordinates { (0.25, 0.05) (0.6, 0.05)} [arrow inside={end=stealth,opt={black,scale=2}}{0.99}];
		\end{tikzpicture}
	}~
		\subfloat[\scriptsize{Configuration $\NetworkConfig_3$}]{
			\label{fig.c1-splittable-deadlock}
			\scalebox{0.95}{
				\begin{tikzpicture}[scale=0.75]
					\node [vertex] (1) at (-0.5,0) {$s_1$};
					\node [vertex] (2) at (0.5,0) {$s_2$};
					\node [vertex] (3) at (1.5,0) {$s_3$};
					\node [vertex] (4) at (2.5,0) {$s_4$};
					\node [vertex] (5) at (3.5,0) {$s_5$};
					\node [vertex] (6) at (1,1) {$s_6$};
					\node [vertex] (7) at (2,1) {$s_7$};
					\draw[act] (1) to (2);
					\draw[act] (2) to (3);
					\draw[act] (2) to (6);
					\draw[act] (3) to (4);
					\draw[act] (3) to (6);
					\draw[act] (3) to (7);
					\draw[act] (4) to (5);
					\draw[act] (4) to (7);

					\draw[red!75!white, thick] plot [smooth,tension=0.5]  coordinates { (0.65,0.15) (1.5,0.2) (1.8,1.1) (2.15,1.15) (2.5,0.2) (3.5, 0.2)} [arrow inside={end=stealth,opt={red!50!white,scale=2}}{0.99}];
					
					\draw[blue, thick] plot [smooth,tension=0.5] coordinates { (-0.4,0.2) (0.5,0.2) (0.8,1.1) (1.15,1.15) (1.5,0.2) (2.5,0.2) } [arrow inside={end=stealth,opt={blue,scale=2}}{0.99}];

					\draw[green!40!black, thick] plot [smooth,tension=0.6] coordinates { (0.68,0.1) (1,0.7) (1.32,0.1) } [arrow inside={end=stealth,opt={green!40!black,scale=2}}{0.99}];

					\draw[black, thick] plot [smooth,tension=0.5] coordinates { (0.5, -0.2) (1.5, -0.2) (2.5, -0.2) } [arrow inside={end=stealth,opt={black,scale=1.5}}{0.99}];
				\end{tikzpicture}
			}
		}
		\subfloat[\scriptsize{Configuration $\NetworkConfig_3'$}]{
			\label{fig.c2-prime-splittable-deadlock}
			\scalebox{0.95}{
				\begin{tikzpicture}[scale=0.75]
					\node [vertex] (1) at (-0.5,0) {$s_1$};
					\node [vertex] (2) at (0.5,0) {$s_2$};
					\node [vertex] (3) at (1.5,0) {$s_3$};
					\node [vertex] (4) at (2.5,0) {$s_4$};
					\node [vertex] (5) at (3.5,0) {$s_5$};
					\node [vertex] (6) at (1,1) {$s_6$};
					\node [vertex] (7) at (2,1) {$s_7$};
					\draw[act] (1) to (2);
					\draw[act] (2) to (3);
					\draw[act] (2) to (6);
					\draw[act] (3) to (4);
					\draw[act] (3) to (6);
					\draw[act] (3) to (7);
					\draw[act] (4) to (5);
					\draw[act] (4) to (7);
					
					\draw[red!75!white, thick] plot [smooth,tension=0.5]  coordinates { (0.5,0.2) (0.8,1.1) (1.15,1.15) (1.5,0.25) (2.5, 0.2) (3.5, 0.2)} [arrow inside={end=stealth,opt={red!50!white,scale=2}}{0.99}];
					
					\draw[green!40!black, thick] plot [smooth,tension=0.6] coordinates { (0.68,0.1) (1,0.7) (1.32,0.1) } [arrow inside={end=stealth,opt={green!40!black,scale=2}}{0.99}];
					
					\draw[black, thick] plot [smooth,tension=0.5] coordinates { (0.5, -0.2) (1.5, -0.2) (2.5, -0.2) } [arrow inside={end=stealth,opt={black,scale=1.5}}{0.99}];

					\draw[blue, thick] plot [smooth,tension=0.5] coordinates { (-0.4,0.2) (1.45,0.25) (1.8,1.1) (2.15,1.15) (2.5,0.2) } [arrow inside={end=stealth,opt={blue,scale=2}}{0.99}];

				\end{tikzpicture}
			}
		}
	\caption{An update with splittable deadlock.}
	\label{fig.splittable-deadlock}
\end{figure}
\smartparagraph{Splitting volume example.}
Consider the example in Figure~\ref{fig.splittable-deadlock}, where every flow
has size 4. This case presents a deadlock because we cannot move $\Flow_R$ first
without congesting $\Link{2}{6}$ or move $\Flow_B$ first without congesting
$\Link{2}{3}$.

We resolve this deadlock by splitting the flows. Switch $\NwSwitch_2$ infers
that $\Link{2}{3}$ has 2 units of capacity and starts moving the corresponding
fraction of $\Flow_B$ onto that link. This movement releases sufficient capacity
to move $\Flow_R$ to $\Link{2}{6}$ and $\Link{6}{3}$. Once $\Flow_R$ is moved,
there is enough capacity to complete the mode of $\Flow_B$.

Note that, we could even speed up part of the update by moving two units of
$\Flow_R$ simultaneously to moving two units of $\Flow_B$ at the very beginning.
This is what our decentralized solution would do.
\section{ez-Segway}
\label{sec:algo}

\begin{figure*}[tb]
\centering
\scalebox{0.85}{
	\subfloat[No reversed order common switches]{
		~~~~~
		\begin{tikzpicture}[scale=1.24]\label{fig:no-cycle}
		\draw[black,thick] plot [smooth,tension=0.5] coordinates { (0,0.5) (0.5,0) (1,0.5) (1.5,0) (2,0.4) (2.85,0.45) } [arrow inside={end=stealth,opt={black,scale=2}}{0.999}];
		
		\draw[dashed,thick] plot [smooth,tension=0.5] coordinates { (0,0.5) (0.5,1) (1,0.5) (1.5,1) (2,0.6) (2.85,0.55)} [arrow inside={end=stealth,opt={black,scale=2}}{0.999}];
		
		\foreach \x in {0,...,3}
		{
			\FPeval{\result}{\x};
			\FPtrunc{\val}{\result}{0}
			\node [vertex, draw=black, fill=white!70!blue, minimum size=5pt,inner sep=1pt] (\x) at (\x,0.5) {$s_\val$};
		}
		
		\foreach \x in {0,...,1}
		{
			\FPeval{\result}{\x+4};
			\FPtrunc{\val}{\result}{0}
			\node [vertex, minimum size=5pt,inner sep=1pt] (\x) at (\x + 0.5,0) {$s_\val$};
		}
		\foreach \x in {6,...,7}
		{
			\node [vertex, minimum size=5pt,inner sep=1pt] (\x) at (\x - 5.5,1) {$s_\x$};
		}
		
		\end{tikzpicture}
		~~~~~
	}
}~
\scalebox{0.85}{
	\subfloat[A pair of reversed order common switches]{
		~~~~~
		\begin{tikzpicture}[scale=1.24]\label{fig:one-cycle}
		\draw[black,thick] plot [smooth,tension=1] coordinates { (0,0.5) (0.5,0) (1,0.5) (1.5,0) (2,0.5) (2.5,0) (3,0.4) (3.8,0.45)} [arrow inside={end=stealth,opt={black,scale=2}}{0.999}];
		
		\draw[black,thick,dashed] plot [smooth,tension=1] coordinates { (0,0.5) (1.25,1.3) (2.5,1) (2.1,0.5) (1,0.5) (1.5,1) (2.9,0.6) (3.8,0.55)} [arrow inside={end=stealth,opt={dashed,scale=2}}{0.999}];
		
		\foreach \x in {0,3}
		{
			\FPeval{\result}{\x};
			\FPtrunc{\val}{\result}{0}
			\node [vertex, draw=black, fill=white!70!blue, minimum size=5pt,inner sep=1pt] (\x) at (\x,0.5) {$s_\val$};
		}
		
		\foreach \x in {1,2}
		{
			\FPeval{\result}{\x};
			\FPtrunc{\val}{\result}{0}
			\node [vertex, draw=black, fill=white!70!red, minimum size=5pt,inner sep=1pt] (\x) at (\x,0.5) {$s_\val$};
		}
		
		\node [vertex, draw=black, fill=white!70!blue, minimum size=5pt,inner sep=1pt] (9) at (4,0.5) {$s_9$};
		
		\foreach \x in {0,...,2}
		{
			\FPeval{\result}{\x+4};
			\FPtrunc{\val}{\result}{0}
			\node [vertex, minimum size=5pt,inner sep=1pt] (\x) at (\x + 0.5,0) {$s_\val$};
		}
		
		\node [vertex, minimum size=5pt,inner sep=1pt] (7) at (1.5,1) {$s_7$};
		\node [vertex, minimum size=5pt,inner sep=1pt] (8) at (2.5,1) {$s_8$};
		\end{tikzpicture}~~~
	}
}~
\scalebox{0.85}{
	\subfloat[Two pairs of reversed common switches]{
		\begin{tikzpicture}[scale=1.18]\label{fig:two-cycle}
	
		\draw[black,thick] plot [smooth] coordinates { (0,0.5) (1,0.5) (2,0.5) (3,0.5) (4,0.5) (5,0.5) (5.8,0.5)} [arrow inside={end=stealth,opt={black,scale=2}}{0.999}];
		
		\draw[dashed,thick] plot [smooth,tension=0.75] coordinates { (0,0.5) (1.5,-0.35) (3,0.5) (2.6,0.9) (2,0.5) (1.5,0.1) (1,0.5) (1.6,1.02) (3,1.2) (4.3,1.02) (5,0.5) (4.5,0) (4,0.5) (5,1) (5.9,0.6)} [arrow inside={end=stealth,opt={dashed,scale=2}}{0.999}];
		
		\foreach \x in {0,2,6}
		{
			\FPeval{\result}{\x};
			\FPtrunc{\val}{\result}{0}
			\node [vertex, draw=black, fill=white!70!blue, minimum size=5pt,inner sep=1.5pt] (\x) at (\x,0.5) {$s_\val$};
		}
	
		\foreach \x in {1,3,4,5}
		{
			\FPeval{\result}{\x};
			\FPtrunc{\val}{\result}{0}
			\node [vertex, draw=black, fill=white!70!red, minimum size=5pt,inner sep=1.5pt] (\x) at (\x,0.5) {$s_\val$};
		}
		
		\node [vertex, draw=black, minimum size=5pt,inner sep=1.2pt] (7) at (1.5,-0.35) {$s_7$};
		\node [vertex, draw=black, minimum size=5pt,inner sep=1.2pt] (8) at (1.5,0.1) {$s_8$};
		\node [vertex, draw=black, minimum size=5pt,inner sep=1.2pt] (9) at (3,1.2) {$s_9$};
		\node [vertex, draw=black] (10) at (4.5,0) {$s_{10}$};
		
		\end{tikzpicture}
	}
}
\vspace{-1ex}
\caption{Different cases of segmentation.}
\label{fig:loop}
\end{figure*}
  
\ezsegway is a mechanism to update the forwarding state in a fast and consistent
manner: it improves update completion time while preventing forwarding anomalies
(\ie, black-hole, loops), and avoiding the risk of link congestion.

To achieve fast updates, our design leverages a small 
number of crucial, yet simple, update
functionalities that are performed by the switches. The centralized controller 
leverages its global visibility into network conditions to
compute and transmit \textit{once} to the switches the information needed to 
execute an update. The switches then schedule and perform network update 
operations without interacting with the controller.

As per our problem formulation (\S\ref{sec:problem}), we assume the controller
has knowledge of the initial and final network configurations $\NetworkConfig,
\NetworkConfig'$. In practice, each network update might be determined by some
application running on the controller (e.g., routing, monitoring, etc.). In our
settings, the controller is centralized and we leave it for future work to
consider the case of distributed controllers.

\smartparagraph{Controller responsibilities. }In \ezsegway, on every 
update, the controller performs two tasks: $(i)$ The
controller identifies flow segments ({\em to enable parallelization}), and it
identifies execution dependencies among segments ({\em to guarantee
consistency}). These dependencies are computed using a fast heuristic that
classifies segments into two categories and establishes, for each segment pair,
which segment must precede another one. $(ii)$ The controller computes the
dependency graph ({\em to avoid congestion}), which encodes the dependencies
among update operations and the bandwidth requirements for the update operations.

\smartparagraph{Switch responsibilities. }The set of functions at switches 
encompasses simple message exchange among
adjacent switches and a greedy selection of update operations to be executed
based on the above information (\ie, the  dependencies among segments) provided by the controller. These functions are
computationally inexpensive and easy to implement in currently available
programmable switches, as we show in Section~\ref{sec:micro-benchmark}.

Next, we first describe in more detail our segmentation technique, which
\ezsegway uses to speed up network updates while guaranteeing anomaly-free
forwarding during updates. Then, we describe our scheduling mechanism based on
segments of flows, which avoids congestion. Finally, we show how to cast this
mechanism into a simple distributed setting that requires minimal computational
power in the switches.

\subsection{Flow segmentation}\label{sec:reliability-segmentation}
Our segmentation technique provides two benefits: it speeds up the update
completion time of a flow to its new path and it reduces the risk of update
deadlocks due to congested links by allowing a more fine-grained control of the
flow update. Segment identification is \emph{performed by the controller} when a
network state update is triggered. We first introduce the idea behind
segmentation and then describe our technique in details.
 
 
\smartparagraph{Update operation in the distributed approach.} Consider the update problem represented in
Figure~\ref{fig:no-cycle}, where a flow $F$ needs to be moved from the old
(solid line) path $(s_0s_4\dots s_2s_3)$ to the new (dashed-line) path $(s_0s_6\dots s_2s_3)$.
A simple approach would work as follows. A message is sent from $s_3$ to its
predecessor on the new path (\ie, $s_2$) for acknowledging the beginning of the
flow update. Then, every switch that receives this message forwards it as soon
as it installs the new rule for forwarding packets on the new path. Since the
message travels in the reverse direction of the new path, the reception of such
message guarantees that each switch on the downstream path consistently updated
its forwarding table to the new path, thus preventing any risk of black-holes or
forwarding loops anomalies. Once the first switch of the new path (\ie, $s_0$)
switches to the new path, a new message is sent from $s_0$ towards $s_3$ along
the old path for acknowledging that no packets will be forwarded anymore along
the old path, which can be therefore safely removed. Every switch that receives
this new message removes the old forwarding entry of the old path and afterwards
forwards it to its successor on the old path. We call this flow update technique
from an old path to the new one \emph{\basicmigration}. It is easy to observe
that \basicmigration prevents forwarding loops and black-holes 
\if \TR 1
(proof in Appendix~\ref{sec:correctness}). 
\else
(proof in the extended version of this paper~\cite{ez-segway-tr}). 
\fi

\newcommand{ \basicMigrationCorrectnessStatement}{ \basicmigration is black-hole- and forwarding-loop-free.}
\begin{theorem}\label{theo:basic-migration}
 \basicMigrationCorrectnessStatement
\end{theorem}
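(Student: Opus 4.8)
The plan is to track, at every instant of the two-phase \basicmigration execution, the forwarding rule that each switch holds for the flow, and to show that any packet injected at the ingress follows a well-defined simple walk that terminates at the egress. Write the old path as $P=(u_0,\dots,u_m)$ and the new path as $Q=(w_0,\dots,w_n)$, with $u_0=w_0$ the ingress and $u_m=w_n$ the egress. I would rely on two structural invariants that come directly from the protocol mechanics. First, because Phase~1 installs the new rules in reverse order starting from the egress, the set of switches that have already installed their new rule is at all times a \emph{suffix} $\{w_j,\dots,w_n\}$ of $Q$; consequently, as soon as a packet is forwarded by a new rule at some $w_j$, every downstream switch on $Q$ has also installed its new rule, so the packet travels on $Q$ all the way to the egress. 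Second, Phase~2 begins only after the ingress has switched, and the removal message travels \emph{forward} along $P$; assuming in-order (FIFO) delivery on links, this message trails the last data packet that each switch forwards, so no old rule is deleted while a packet still needs it, and during the whole of Phase~1 the old path is untouched.

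Given these invariants, I would prove \textbf{black-hole freedom} by a case split on whether the ingress has switched to $Q$. Before the switch, the ingress uses its old rule and, by the second invariant, the old path is fully intact, so a packet reaches the egress along $P$ (and if it meets a common switch that has already installed its new rule, the first invariant carries it to the egress along $Q$). After the switch, a freshly injected packet follows $Q$, which the first invariant guarantees is installed end-to-end; packets still in flight on $P$ are safe because the removal message cannot overtake them. In every case the packet is forwarded until it leaves at the egress, so it is never unintendedly dropped.

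For \textbf{loop freedom} I would argue that each packet's route is a simple walk. By the first invariant a packet never returns from a new rule to an old rule, so its trajectory is a prefix of $P$ followed by a suffix of $Q$, joined at a single common switch $c$ where the switchover happens. Each of the two pieces is simple because $P$ and $Q$ are simple paths, so the concatenation can fail to be simple only if the $P$-prefix and the $Q$-suffix share a switch $x\neq c$. Such an $x$ would be common to both paths, lying \emph{before} $c$ on $P$ but \emph{after} $c$ on $Q$ --- that is, $(x,c)$ would be a pair of common switches appearing in reversed relative order on the two paths. In the regime in which \basicmigration is applied (Figure~\ref{fig:no-cycle}, no reversed-order common switches) no such pair exists, so the route is simple and no loop can form; this is precisely the configuration that segmentation (Figure~\ref{fig:loop}) is later designed to guarantee.

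The main obstacle I anticipate is handling packets that are mid-transition --- routed partly by old and partly by new rules. This is where both arguments are delicate: for black-hole freedom it forces the FIFO assumption that keeps the removal message behind the in-flight packets, and for loop freedom it is exactly the concatenated old-prefix/new-suffix route whose simplicity hinges on the absence of reversed-order common switches. Establishing the ``installed new rules form a suffix of $Q$'' invariant rigorously --- including its interaction with common switches that replace rather than merely add a rule --- is the technical crux on which both properties rest.
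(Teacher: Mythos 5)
Your two invariants are exactly the engine of the paper's own proof: the ``installed new rules form a suffix of $Q$'' property is what the paper establishes by induction (Lemma~\ref{lemm:basic-migration-2}: every switch that has received \MsgGoodToMove permanently forwards to its successor on the new path, so a packet that joins the new path rides it to $s_f$), and your black-hole argument rests on the same observation as Lemma~\ref{lemm:basic-migration-no-black-holes}, namely that \MsgRemoving is created only after the ingress has switched. Your explicit FIFO assumption for packets in flight on the old path is, if anything, \emph{more} careful than the paper's proof, which only argues about packets injected after the \MsgRemoving exists and leaves the overtaking issue implicit. (The paper's proof of the theorem additionally establishes that the update terminates, via liveness lemmas for \MsgGoodToMove and \MsgRemoving delivery; the statement itself does not require this, so your omission is immaterial.)

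The genuine gap is in your loop-freedom argument. You prove that every packet's route is a \emph{simple} walk, but only under the hypothesis that the old and new paths share no common switches in reversed order, and you justify that hypothesis by asserting that \basicmigration is applied only in that regime. Theorem~\ref{theo:basic-migration} carries no such hypothesis, and the paper needs it in full generality: Section~\ref{sec:reliability-segmentation} explicitly offers updating a whole flow with \basicmigration along the reversed new path as a valid (if slow) option precisely in the situation of Figure~\ref{fig:one-cycle}, where a reversed-order pair exists, and the correctness proof of \segmentedmigration invokes Theorem~\ref{theo:basic-migration} as a black box. Moreover, in that general setting your simplicity claim is actually false: take old path $(a,x,m,c,b)$ and new path $(a,c,n,x,b)$, so $(x,c)$ is a reversed-order pair. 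A packet can travel $a{\to}x{\to}m{\to}c$ on old rules while the \MsgGoodToMove chain $b{\to}x{\to}n{\to}c$ completes on link-disjoint paths (so FIFO cannot prevent the race); the packet then continues $c{\to}n{\to}x{\to}b$ on new rules and visits $x$ twice. This does not contradict the theorem --- the packet still reaches $s_f$ --- but it breaks the property you are proving. What remains true, and is all the theorem needs, is the unconditional consequence of your own Invariant~1: every route is a prefix of $P$ followed by a suffix of $Q$, hence of bounded length and terminating at the egress, so no packet can loop persistently. That weaker, hypothesis-free statement is the paper's actual argument; dropping the simplicity claim and arguing finite delivery directly closes the gap.
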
 

 
\smartparagraph{Speeding up a flow update with segmentation. } It can be observed that the whole flow
update operation could be performed faster. With segmentation the subpath
$(s_0s_4s_1)$ of the old path can be updated with \basicmigration to
$(s_0s_6s_1)$ while subpath $(s_1s_5s_2)$ of the old path is updated
with \basicmigration to $(s_1s_7s_2)$. In fact, $s_6$ does not have to wait for
$s_1$ to update its path since $s_1$ is always guaranteed to have a forwarding
path towards $s_3$.  We say that the pairs $(s_0s_4s_1,s_0s_6s_1)$ and
$(s_1s_5s_2,s_1s_7s_2)$ are two ``segments'' of the flow update. Namely, a
\emph{segment} of a flow update from an old path $P_o$ to a new path $P_n$ is a
pair of subpaths $P'_o$ and $P'_n$ of $P_o$ and $P_n$, respectively, that start
with the same first switch. We denote a segment by $(P'_o,P'_n)$. The action of
\emph{updating a segment} consists in performing \basicmigration from $P'_o$ to
$P'_n$.


\smartparagraph{Dependencies among segments.} In some cases, finding a set of
segments that can be updated in parallel is not possible. For instance, in
Figure~\ref{fig:one-cycle}, we need to guarantee that $s_2$ updates its next-hop
switch only after $s_1$ has updated its own next-hop; otherwise, a forwarding
loop along $(s_2,s_1,s_5)$ arises. While one option would be to update the whole flow
using \basicmigration along the reversed new path, some parallelization is still
possible. Indeed, we can update segments $S_1{=}(s_0s_4s_1,s_0s_8s_2)$ and
$S_2{=}(s_1s_5s_2,s_1s_7s_3)$ in parallel without any risk of creating a
forwarding loop. In fact, $s_2$, which is a critical switch, is not yet updated
and the risk of forwarding oscillation is avoided. After $S_2$ is updated, also
segment $S_3{=}(s_2s_6s_3,s_2s_1)$ can be updated. Every time two switches appear
in reversed order in the old and new path, one of the two switches has to wait
until the other switch completes its update.
 

\smartparagraph{Anomaly-free segment update heuristic.} It would be tempting to
create as many segments as possible so that the update operation could benefit
at most from parallelization. However, if the chain of dependencies among the segments is too long the updates may be unnecessary slowed down. In practice, computing a maximal set of segments that minimize the chain of dependencies among them is not an easy task. We rely on a heuristic that classifies segments
into two categories called {\sc InLoop} and {\sc NotInLoop} and a dependency
mapping $dep: $ {\sc InLoop}${\rightarrow}${\sc NotInLoop} that assigns each
{\sc InLoop} segment  to a {\sc NotInLoop} segment that must be executed before
its corresponding {\sc InLoop} segment to avoid a forwarding loop. This
guarantees that the longest chain of dependencies is limited to at most $2$.
We call such technique \emph{\segmentedmigration}.
  
\newcommand{\segmentationHeuristicDescritption}{
Our heuristic works as follows. It identifies the set of common switches
$\NwSwitchSet_C$ among the old and new path, denotes by $\NwPairSwitchSet$ the
pairs of switches in $\NwSwitchSet_C$ that appear in reversed order in the two
paths, and sets $\NwSwitchSet_R$ to be the set of switches that appear in $\NwSwitchSet_C$. It selects a subset $\NwPairSwitchSet_R$ of $\NwPairSwitchSet$ that has the
following properties: 
$(i)$ for each two pairs of switches $(r,s)$ and $(r',s')$ in  $\NwPairSwitchSet_R$
neither $r'$ nor $s'$ are contained in the subpath from $r$ to $s$ in the old
path, 
$(ii)$ every switch belonging to a pair in $\NwSwitchSet_C$ is contained in at least a subpath
from $r$ to $s$ for a pair $(r,s)$ of $\NwPairSwitchSet_R$, and
$(iii)$ the number of pairs in $\NwPairSwitchSet_R$ is minimized.
Intuitively, each pair $(s_1,s_2)$ of $\NwPairSwitchSet_R$ represents a
pair of switches that may cause a forwarding loop unless $s_2$ is updated after
$s_1$. The complexity for computing these pairs of switches is $O(N)$, where $N$
is the number of switches in the network.

The segments are then created as follows. Let $\NwSwitchSet_R^1$ ($\NwSwitchSet_R^2$) 
be the set of switches that appear as the first 
(second) element in at least one pair of $\NwPairSwitchSet_R$.
Let $\NwSwitchSet_C^* \subseteq \NwSwitchSet_C$ be the set of common vertices
that are  contained in neither $\NwSwitchSet_R^1$ nor $\NwSwitchSet_R^2$.
For each switch $s$ in $\NwSwitchSet_C^* \cup \NwSwitchSet_R^1$ ($ \NwSwitchSet_R^2$), 
we create a {\sc NotInLoop} ({\sc InLoop}) segment $S{=}(P_o,P_n)$, where  $P_o$ ($P_n$)
is the subpath of the old (new) path from $s$ to the next switch $r$ in 
$\NwSwitchSet_C^* \cup \NwSwitchSet_R^1 \cup \NwSwitchSet_R^2$. 
Morevoer, if $S$ is an
{\sc InLoop} segment, we set $dep(S)$ to be $S_r$, where $S_r$ is the segment
starting from $r$.
}
  
\segmentationHeuristicDescritption
The following theorem 
\if \TR 0
(proof in the extended version of this paper~\cite{ez-segway-tr})
\else
(proof in Appendix~\ref{sec:correctness}) 
\fi
 guarantees
 that \segmentedmigration does not create black-holes or
forwarding loops.

\newcommand{ \segmentedMigrationCorrectnessStatement}{  \segmentedmigration is black-hole- and forwarding-loop-free.}
\begin{theorem}\label{theo:segmentation-anomaly-freeness}
	\segmentedMigrationCorrectnessStatement
\end{theorem}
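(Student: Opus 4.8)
The plan is to reduce the claim to the already-established safety of \basicmigration (Theorem~\ref{theo:basic-migration}) and then show that composing segment updates, subject to the dependency mapping $dep$, introduces neither a black-hole nor a loop. Recall that each segment $S=(P_o,P_n)$ is updated by running \basicmigration on the pair $(P_o,P_n)$, that the first switch of a segment is a switch in the common set $\NwSwitchSet_C$, and that both the old and new subpaths of a segment terminate at switches in $\NwSwitchSet_C^* \cup \NwSwitchSet_R^1 \cup \NwSwitchSet_R^2 \subseteq \NwSwitchSet_C$. The central structural fact I would use throughout is that every common switch $s\in\NwSwitchSet_C$ always holds a valid forwarding entry toward the egress — along $P_o$ in the old configuration and along $P_n$ in the new one — so a packet reaching any common switch is guaranteed a continuation to the destination in whichever configuration that switch currently holds.

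For black-hole freedom I would follow an arbitrary in-flight packet. Within a single segment, Theorem~\ref{theo:basic-migration} guarantees that \basicmigration never drops the packet and delivers it to a terminal switch of the segment, which is common. By the structural fact above, that common switch then forwards the packet into the next segment (or to the egress), where the same invariant applies. Since the created segments cover $P_o\cup P_n$ and are delimited by common switches, chaining these per-segment guarantees yields an uninterrupted forwarding path to the egress at every instant, so no packet is ever dropped.

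The substance of the theorem is loop freedom, and I would first characterize when a loop can appear. Order the switches by their position along the old path $P_o$; every old edge advances this order, so any cycle in the current forwarding graph must contain at least one new edge pointing \emph{backward} in the old-path order. Such a backward edge is emitted by a switch whose new continuation re-enters $P_o$ at an earlier common switch; equivalently it corresponds to a reversed-order pair of $\NwPairSwitchSet$, and the emitter is the \emph{later} of the two switches in old-path order. By the covering property~$(ii)$ of $\NwPairSwitchSet_R$, every reversed pair is subsumed by some selected pair $(r,s)\in\NwPairSwitchSet_R$ with $r$ earlier and $s$ later; by construction $s$ starts an {\sc InLoop} segment $S$ with $dep(S)=S_r$. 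I would then argue that $dep(S)=S_r$ forces the segment starting at $r$ to complete before $s$ updates its next hop: once that segment has completed, $r$ and every switch between $r$ and $s$ along the old path already forward along the new path and therefore point \emph{forward} in the old order, so the backward edge from $s$ cannot be closed into a cycle. Hence the forwarding configuration required to create a loop never arises.

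Two auxiliary points finish the argument. First, I would invoke the non-nesting property~$(i)$ of $\NwPairSwitchSet_R$ to show that the dependency relation is acyclic of depth at most two — every prerequisite under $dep$ is a {\sc NotInLoop} segment, which itself waits on nothing — so the induced schedule is realizable and does not merely satisfy the claim vacuously through a spurious deadlock; property~$(i)$ also guarantees that the covered old-path intervals of distinct selected pairs do not interleave, so the loop-breaking argument applies independently to each active reversed pair even while other segments update in parallel. The main obstacle I anticipate is the loop characterization itself: making rigorous the reduction from an arbitrary cycle in the transient forwarding graph — whose vertices may include non-common intermediate switches such as $s_5$ or $s_7$ in Figure~\ref{fig:one-cycle} — down to a single responsible pair in $\NwPairSwitchSet_R$, and confirming that honoring $dep$ for that pair suffices to exclude the cycle under all interleavings of the parallel segment updates.
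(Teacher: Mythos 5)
Your black-hole argument and your dependency-based loop-breaking follow the same skeleton as the paper's proof (Lemmas~\ref{lemm:segmented-migration-black-hole-freedom} and~\ref{lemm:segmented-migration-forwarding-loop-freedom} in Appendix~\ref{sec:correctness}), but there is a genuine gap at precisely the step you yourself flag as ``the main obstacle'': confining an arbitrary transient cycle to a single selected pair of $\NwPairSwitchSet_R$. You derive the claim ``every reversed pair is subsumed by some selected pair $(r,s)\in\NwPairSwitchSet_R$'' from the covering property~$(ii)$. That inference does not hold: property~$(ii)$ only places \emph{each endpoint} of a reversed pair in \emph{some} selected interval; it does not place both endpoints of the same reversed pair in the \emph{same} interval. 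Schematically, if the common switches appear as $a,b,c,d$ on the old path and as $d,c,b,a$ on the new path, then the selection $\{(a,b),(c,d)\}$ satisfies both $(i)$ (disjointness) and $(ii)$ (covering), yet the reversed pairs $(b,c)$ and $(a,d)$ straddle the two intervals, and the induced dependencies $dep(S_b)=S_a$ and $dep(S_d)=S_c$ constrain neither of them; a backward new-path run can then leave one interval and re-enter the old path inside an earlier one, and your per-interval argument says nothing about such a cycle. What excludes these selections is the minimality property~$(iii)$, which your proof never invokes. This is exactly the content of the paper's key Lemma~\ref{lemm:segmented-migration-no-loops-between-pairs}, proved by contradiction: if a forwarding path escaped backward from the interval of $(r,s)$ into the interval of an earlier pair $(r',s')$, the two pairs could be merged into a single pair spanning both intervals, contradicting the minimality of $|\NwPairSwitchSet_R|$.

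The repair is therefore to replace the appeal to property~$(ii)$ by a confinement lemma driven by property~$(iii)$: first prove that no forwarding path starting inside the old-path interval of a selected pair can reach a common switch that lies outside and before that interval (the merging contradiction above), and only then run your dependency argument inside each interval, where $dep$ does apply. Two smaller imprecisions are worth fixing as well. First, after $S_r$ completes it is not true that ``every switch between $r$ and $s$ along the old path already forwards along the new path''; those switches have their old entries \emph{removed} (they may hold no entry for the flow, or not lie on the new path at all), so the correct statement is only that $r$ itself points forward and that the old subpath is drained before removal. Second, your observation that old-path edges always advance the old-path order is sound, but the ``backward edge'' in a cycle is in general a backward \emph{run} through non-common switches (your $s_7$, $s_8$ example), so the correspondence to a reversed pair of common switches must be stated for runs, which is what makes the confinement lemma, rather than a purely local edge argument, unavoidable.
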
 
As an example, consider the update depicted in Figure~\ref{fig:two-cycle}. 
All the switches in common between the old and new path are colored in red
or blue.
Let $\NwPairSwitchSet_R$ be $\{(s_1,s_3),(s_4,s_5)\}$ (colored in red) for which both properties $(i)$, $(ii)$, and
$(iii)$ hold. The resulting {\sc NotInLoop} segments are
$S_1{=}(s_1s_2s_3,s_1s_9s_5),S_2{=}(s_4s_5,s_4s_6), S_3{=}(s_0s_1,s_0s_7s_3)$, the
{\sc InLoop} segments are $S_4{=}(s_3s_4,s_3s_2s_8s_1),S_5=(s_5s_6,s_5s_{10}s_4)$,
while the dependencies $dep(S_4)$ and $dep(S_5)$ are $S_1$ and $S_2$, respectively.
   
\smartparagraph{Segmentation and middleboxes.} If the network policy dictates
that a specific flow of data traffic must traverse one or more middleboxes,
segmenting a flow must be carefully performed as the combination of fragments
from the old and the new path may not traverse the set of middleboxes. In these
cases, segmentation can only be applied if there are no {\sc InLoop} segments.

\subsection{Scheduling updates}
\label{sec:ez-segway-scheduling}

Performing a network update in a distributed manner requires coordination among
the switches. As we already showed in
Section~\ref{sec:reliability-segmentation}, segments' updates that can cause
forwarding loops must be executed according to a partial order of all the update
operations.

We now consider the problem of performing a network update without congesting
any link in the network. The main challenge of this procedure is how to avoid
deadlock states in which any update operation would create congestion. We
propose a heuristic that effectively reduces the risk of deadlocks, which we
positively experimented in our evaluation section
(Section~\ref{sec:evaluation}).
Our heuristic is correct, i.e., our heuristic finds a 
congestion-free migration only if the network update problem admits a congestion-free solution.

Figure~\ref{fig.select-update} shows a case in which an inappropriate schedule
of network operations leads to a deadlock situation. $\Flow_R$ and $\Flow_B$
have size $4$ while $\Flow_G$ and $\Flow_N$ have size $3$. The only operations
that can be performed without congesting a link are updating either  $\Flow_R$
or $\Flow_G$. Which one to choose first? Choosing $\Flow_G$ will not allow to
update $\Flow_B$ since $\Flow_G$ releases only 3 units of free capacity to
$\Link{2}{3}$. This leads to a deadlock that can only be solved via splitting
mechanisms, which increases the completion time of the update. Instead, choosing
to update $\Flow_R$ releases 4 units of capacity to $\Link{2}{3}$, which allows
us to update $\Flow_B$. In turns, by updating $\Flow_B$ we have enough capacity
to move $\Flow_G$ to its new path, and the update completes successfully.

In the previous example, we noted that splitting volumes helps avoiding deadlocks; however, this is less preferable than finding a sequence of update operations that do not
split flows for two main reasons: $(i)$
splitting flows requires more time to complete an update and $(ii)$ the physical
hardware may lack support for splitting flows based on arbitrary weights. In the
last case, we stress the fact that \ezsegway can be configured to disable flow
splitting during network updates.

Considering again the previous example, we observe that there are two types of
deadlocks we must consider. If a network update execution reaches a state in
which it is not possible to make any progress unless by splitting a flow, we say
that there is \emph{splittable deadlock} in the system. Otherwise, if even
splitting flows cannot make any progress, we say that there is an
\emph{unsplittable deadlock} in the network.

Even from a centralized perspective, computing a congestion-free migration is
not an easy task~\cite{Brandt2016On}. Our approach employs a mechanism to
centrally pre-compute a \emph{dependency graph} between links and segment
updates that can be used to infer which updates are more critical than others
for avoiding deadlocks.
 
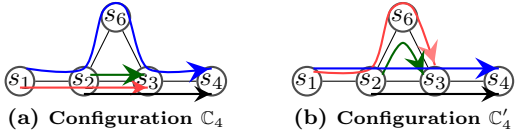
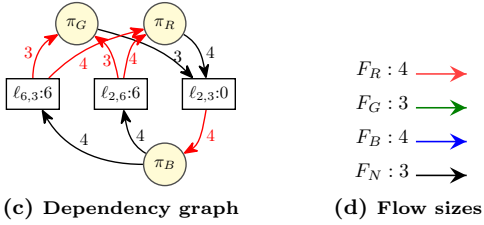
\begin{figure}[ptb]
	\centering
	\vspace{-1ex}
	\begin{tabular}{cc}
		\subfloat[\scriptsize{Configuration $\NetworkConfig_4$}]{
			\label{fig.c1-select-update}
			\scalebox{1}{
				\begin{tikzpicture}[scale=0.85]
				\node [vertex] (1) at (-0.5,0) {$s_1$};
				\node [vertex] (2) at (0.5,0) {$s_2$};
				\node [vertex] (3) at (1.5,0) {$s_3$};
				\node [vertex] (4) at (2.5,0) {$s_4$};
				\node [vertex] (6) at (1,1) {$s_6$};
				\draw[act] (1) to (2);
				\draw[act] (2) to (3);
				\draw[act] (2) to (6);
				\draw[act] (3) to (4);
				\draw[act] (3) to (6);
				
				\draw[red!75!white, thick, thick] plot [smooth,tension=0.5] coordinates { (-0.5, -0.1) (0.5, -0.1) (1.5, -0.1) } [arrow inside={end=stealth,opt={red!75!white, thick,scale=1.5}}{0.99}];
				
				\draw[blue, thick] plot [smooth,tension=0.5] coordinates { (-0.4,0.2) (0.5,0.2) (0.8,1.1) (1.15,1.15) (1.5,0.2) (2.5,0.2) } [arrow inside={end=stealth,opt={blue,scale=2}}{0.99}];
				
				\draw[green!40!black, thick] plot [smooth,tension=0.6] coordinates { (0.6, 0.1) (1.45, 0.1) } [arrow inside={end=stealth,opt={green!40!black,scale=2}}{0.99}];
				
				\draw[black, thick] plot [smooth,tension=0.5] coordinates { (0.5, -0.2) (1.5, -0.2) (2.5, -0.2) } [arrow inside={end=stealth,opt={black,scale=1.5}}{0.99}];
				\end{tikzpicture}
			}
		}&
		\subfloat[\scriptsize{Configuration $\NetworkConfig_4'$}]{
			\label{fig.c2-select-update}
			\scalebox{1}{
				\begin{tikzpicture}[scale=0.85]
				\node [vertex] (1) at (-0.5,0) {$s_1$};
				\node [vertex] (2) at (0.5,0) {$s_2$};
				\node [vertex] (3) at (1.5,0) {$s_3$};
				\node [vertex] (4) at (2.5,0) {$s_4$};
				\node [vertex] (6) at (1,1) {$s_6$};
				\draw[act] (1) to (2);
				\draw[act] (2) to (3);
				\draw[act] (2) to (6);
				\draw[act] (3) to (4);
				\draw[act] (3) to (6);
				
				\draw[red!75!white, thick] plot [smooth,tension=0.5]  coordinates { (-0.4, 0.15) (0.5,0.2) (0.8,1.1) (1.15,1.15) (1.5,0.25) } [arrow inside={end=stealth,opt={red!50!white,scale=2}}{0.99}];
				
				\draw[green!40!black, thick] plot [smooth,tension=0.6] coordinates { (0.68,0.1) (1,0.6) (1.32,0.1) } [arrow inside={end=stealth,opt={green!40!black,scale=2}}{0.99}];
				
				\draw[black, thick] plot [smooth,tension=0.5] coordinates { (0.5, -0.2) (1.5, -0.2) (2.5, -0.2) } [arrow inside={end=stealth,opt={black,scale=1.5}}{0.99}];
				
				\draw[blue, thick] plot [smooth,tension=0.5] coordinates { (-0.4,0.2) (1.5,0.2) (1.9,0.2) (2.5,0.2) } [arrow inside={end=stealth,opt={blue,scale=2}}{0.99}];
				
				\end{tikzpicture}
			}
		}
		\\
		\subfloat[\scriptsize{Dependency graph}]{
			\label{fig.c1-dep-graph}
			\scalebox{0.65}{
				\begin{tikzpicture}[scale=1.8, thick,
				StealthFill/.tip={Stealth[line width=1pt, scale=1.5]}, arrows={[round]}]
				
				\node [move] (TR) at (2,1.6) {$\MovePath_R$};
				\node [move] (TG) at (1,1.6) {$\MovePath_G$};
				\node [move] (TB) at (2,0) {$\MovePath_B$};

				\node [resource] (L63) at (0.5,0.8) {$\Link{6}{3}$:$6$};
				\node [resource] (L26) at (1.5,0.8) {$\Link{2}{6}$:$6$};
				\node [resource] (L23) at (2.5,0.8) {$\Link{2}{3}$:$0$};
				
				\draw[-{StealthFill[fill=black]}, black] (TR) to[out=-30,in=100] node[right, pos=0.5] {$4$} (L23);
				\draw[-{StealthFill[fill=black]}, black] (TG) to[out=-15,in=135] node[right, pos=0.7] {$3$} (L23);
				\draw[-{StealthFill[fill=black]}, black] (TB) to[out=150,in=-80] node[right, pos=0.5] {$4$} (L26);
				\draw[-{StealthFill[fill=black]}, black] (TB) to[out=180,in=-60] node[above, pos=0.5] {$4$} (L63);
				\draw[-{StealthFill[fill=red]}, red] (L23) to[out=-100,in=30] node[right, pos=0.5] {$4$} (TB);
				\draw[-{StealthFill[fill=red]}, red] (L26) to[out=80,in=210] node[below, pos=0.65] {$4$} (TR);
				\draw[-{StealthFill[fill=red]}, red] (L63) to[out=45,in=195] node[below, pos=0.4] {$4$} (TR);
				\draw[-{StealthFill[fill=red]}, red] (L26) to[out=100,in=-35] node[below, pos=0.65] {$3$} (TG);
				\draw[-{StealthFill[fill=red]}, red] (L63) to[out=90,in=210] node[left, pos=0.5] {$3$} (TG);
				\end{tikzpicture}
			}
		}&
		\subfloat[\scriptsize{Flow sizes}]{
			~~~
			\scalebox{0.75}{
				\begin{tikzpicture}[scale=2]
				\label{fig.notation-select-update}
				\node[above] at (0,0.8) {$F_R:4$};
				\draw[red!75!white,thick] plot [smooth,tension=0.5] coordinates { (0.3, 0.9) (0.75, 0.9) } [arrow inside={end=stealth,opt={red!75!white,scale=2}}{0.99}];
				
				\node[above] at (0,0.5) {$F_G:3$};
				\draw[green!50!black,thick] plot [smooth,tension=0.5] coordinates { (0.3, 0.6) (0.75, 0.6) } [arrow inside={end=stealth,opt={green!50!black,scale=2}}{0.99}];
				
				\node[above] at (0,0.2) {$F_B:4$};
				\draw[blue,thick] plot [smooth,tension=0.5] coordinates { (0.3, 0.3) (0.75, 0.3) } [arrow inside={end=stealth,opt={blue,scale=2}}{0.99}];
				
				\node[above] at (0,-0.1) {$F_N:3$};
				\draw[black,thick] plot [smooth,tension=0.5]  coordinates { (0.3, 0) (0.75, 0)} [arrow inside={end=stealth,opt={black,scale=2}}{0.99}];
				\end{tikzpicture}
			}
			~~~
		}
	\end{tabular}
	\caption{Choosing a correct update operation.}
	\label{fig.select-update}
\end{figure}

\noindent\textbf{The dependency graph.}
The dependency graph captures the complex set of dependencies between the
network update operations and the available link capacities in the network.
Given a pair of current and target configurations $\NetworkConfig,
\NetworkConfig'$, any execution of network operation $\MovePath$~(\ie,
updating a part of flow to its new path) requires some link capacity from the links
on the new path and releases some link capacity on the old path. We formalize
these dependencies in the \emph{dependency graph}, which is a bipartite graph
$\DependenceGraph{\MovePathSet}{\LinkSetDependent}{E_{free}}{E_{req}}$, where
the two subsets of vertices $\MovePathSet $ and $\LinkSetDependent$ represent
the {\em update operation set} and the {\em link set}, respectively. Each link
vertex $\Link{i}{j}$ is assigned a value representing its current available
capacity. Sets $E_{free}$ and $E_{req}$ are defined as follows:

	\noindent $\bullet$ $E_{free}$ is the set of directed edges from vertices in
	$\MovePathSet$ to vertices in $\LinkSetDependent$. A weighted edge with
	value $v$ from $\MovePath$ to a link $\Link{i}{j}$ represents the increase
	of available capacity of $v$ units at $\Link{i}{j}$ by performing
	$\MovePath$.\\
	$\bullet$ $E_{req}$ is the set of directed edges from vertices
	$\LinkSetDependent$ to vertices $\MovePathSet$. A weighted edge with value
	$v$ from link $\Link{i}{j}$ to $\MovePath$ represents the available
	capacity at $\Link{i}{j}$ that is needed to execute $\MovePath$.

Consider the example update of Figure~\ref{fig.select-update}(b) and (d), where
flows $\Flow_R$, $\Flow_G$, and $\Flow_B$ change paths. The corresponding
dependency graph is shown in Figure~\ref{fig.select-update}(c). In the graph,
circular nodes denote operations and rectangular nodes denote link capacities.
$E_{free}$ edges are shown in black and $E_{req}$ edges are shown in red; the
weight annotations reflect the amount of increased and requested available
capacity, respectively. For instance, $\MovePath_R$ requires 4 units of capacity
from $\Link{2}{6}$ and $\Link{6}{3}$, while it increases available capacity of
$\Link{2}{3}$ by the same units.

We now explore the dependency graph  with respect to the two techniques that
we introduced for solving deadlocks: segmentation and splitting flow volumes.

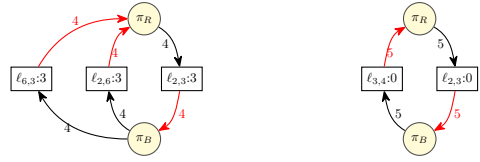
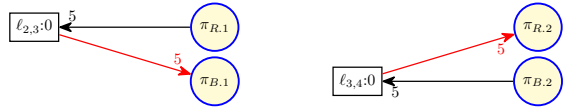
\begin{figure}[!t]
	\centering
	\subfloat[\scriptsize{Splittable deadlock}]{
		\label{fig.deadlock-split}
		\scalebox{0.5}{
			~~~~~~~~~~
			\begin{tikzpicture}[scale=2, thick,
			StealthFill/.tip={Stealth[line width=1pt, scale=1.5]}, arrows={[round]}]
			
			\node [move] (T1) at (2,1.6) {$\MovePath_R$};
			\node [move] (T3) at (2,0) {$\MovePath_B$};
			
			\node [resource] (L63) at (0.5,0.8) {$\Link{6}{3}$:$3$};
			\node [resource] (L26) at (1.5,0.8) {$\Link{2}{6}$:$3$};
			\node [resource] (L23) at (2.5,0.8) {$\Link{2}{3}$:$3$};
			
			\draw[-{StealthFill[fill=black]}, black] (T1) to[out=-30,in=100] node[left, pos=0.5] {$4$} (L23);
			\draw[-{StealthFill[fill=black]}, black] (T3) to[out=150,in=-80] node[right, pos=0.5] {$4$} (L26);
			\draw[-{StealthFill[fill=black]}, black] (T3) to[out=180,in=-60] node[left, pos=0.5] {$4$} (L63);
			\draw[-{StealthFill[fill=red]}, red] (L23) to[out=-100,in=30] node[right, pos=0.5] {$4$} (T3);
			\draw[-{StealthFill[fill=red]}, red] (L26) to[out=80,in=210] node[below, pos=0.5] {$4$} (T1);
			\draw[-{StealthFill[fill=red]}, red] (L63) to[out=60,in=180] node[above, pos=0.5] {$4$} (T1);
			\end{tikzpicture}
			~~~~~~~~~~
		}
	}
	\subfloat[\scriptsize{Deadlock ex. in Fig.~\ref{fig.big-network-segment-deadlock}}]{
		\label{fig.deadlock}
		\scalebox{0.5}{
			~~~~~~~~~~~~~~~~~~
			\begin{tikzpicture}[scale=2, thick,
			StealthFill/.tip={Stealth[line width=1pt, scale=1.5]}, arrows={[round]}]
			
			\node [move] (T1) at (2,1.6) {$\MovePath_R$};
			\node [move] (T3) at (2,0) {$\MovePath_B$};
			
			\node [resource] (L34) at (1.5,0.8) {$\Link{3}{4}$:$0$};
			\node [resource] (L23) at (2.5,0.8) {$\Link{2}{3}$:$0$};
			
			\draw[-{StealthFill[fill=black]}, black] (T1) to[out=-30,in=100] node[left, pos=0.5] {$5$} (L23);
			\draw[-{StealthFill[fill=black]}, black] (T3) to[out=150,in=-80] node[right, pos=0.5] {$5$} (L34);
			\draw[-{StealthFill[fill=red]}, red] (L23) to[out=-100,in=30] node[right, pos=0.5] {$5$} (T3);
			\draw[-{StealthFill[fill=red]}, red] (L34) to[out=80,in=210] node[below, pos=0.5] {$5$} (T1);
			\end{tikzpicture}
			~~~~~~~~~~~~~~~~~~
		}
	}
	\caption{Dependency graph of deadlock cases.}
	\label{fig.dependency-graphs}
\end{figure}

\smartparagraph{Deadlocks solvable by segmentation.}
Coming back to the example given in Section~\ref{sec:overview}, we show its
dependency graph in Figure~\ref{fig.deadlock}. This deadlock is unsolvable by
the splitting technique. However, as discusses before, if we allow a packet to
be carried in the mix of the old and the new path of the same flow, this kind of
deadlock is solvable by using {\em segmentation}.

\ezsegway decomposes this deadlocked graph into two non-deadlocked 
dependency
graphs as shown in Figure \ref{fig.dependency-g1} and \ref{fig.dependency-g2},
hence enabling the network to be updated.

\begin{figure}[!t]
	\centering
	\subfloat[\scriptsize{$\NwSwitch_2$ with segmentation}]{
		\label{fig.dependency-g1}
		\scalebox{0.6}{
			~~~~
			\begin{tikzpicture}[scale=2, thick,
			StealthFill/.tip={Stealth[line width=1pt, scale=1.5]}, arrows={[round]}]
			
			\node [resource] (L23) at (0.5,0.6) {$\Link{2}{3}$:$0$};
			
			\node [move, draw=blue, very thick] (T1) at (2.5,0.6) {$\MovePath_{R.1}$};
			\node [move, draw=blue, very thick] (T3) at (2.5,0) {$\MovePath_{B.1}$};
			
			\draw[-{StealthFill[fill=black]}, black] (T1) -- node[above, pos=0.9] {$5$}  (L23);
			\draw[-{StealthFill[fill=red]}, red] (L23) -- node[above, pos=0.9] {$5$}  (T3);
			\end{tikzpicture}
			~~~~
		}
	}~
	\subfloat[\scriptsize{$\NwSwitch_3$ with segmentation}]{
		\label{fig.dependency-g2}
		\scalebox{0.6}{
			~~~~
			\begin{tikzpicture}[scale=2, thick,
			StealthFill/.tip={Stealth[line width=1pt, scale=1.5]}, arrows={[round]}]
			
			\node [resource] (L34) at (0.5,0) {$\Link{3}{4}$:$0$};
			
			\node [move, draw=blue, very thick] (T1) at (2.5,0.6) {$\MovePath_{R.2}$};
			\node [move, draw=blue, very thick] (T3) at (2.5,0) {$\MovePath_{B.2}$};
			
			\draw[-{StealthFill[fill=black]}, black] (T3) -- node[below, pos=0.9] {$5$}  (L34);
			\draw[-{StealthFill[fill=red]}, red] (L34) -- node[below, pos=0.9] {$5$}  (T1);
			\end{tikzpicture}
			~~~~
		}
	}
	\caption{Dependency graph of deadlock cases.}
	\label{fig.dependency-graphs-segment}
\end{figure}
\smartparagraph{Splittable deadlock.}
Assume that in our example we execute the update operation $\MovePath_G$ before
$\MovePath_R$. After $\Flow_G$ is moved to the new path, the dependency graph
becomes the one in Figure~\ref{fig.deadlock-split}. In this case, every link has
$3$ units of capacity but it is impossible to continue the update. However, if
we allow the traffic of flow $\Flow_R$ and $\Flow_B$ to be carried in both the
old path and the new path at the same time, we can move $3$ units of $\Flow_R$
and $\Flow_B$ to the new path and continue the update that enables updating the
remaining part of flows. 
The deadlock would not be splittable if the capacity of the relevant links was
zero, as shown in Figure~\ref{fig.deadlock}.

In the presence of a splittable deadlock, there exists a splittable flow
$\Flow_p$ and there is a switch $s$ in the new segment of $\Flow_p$. Switch $s$
detects the deadlock and determines the amount of $\Flow_p$'s volume that can be
split onto the new segment. This is taken as the minimum of the available
capacity on the $s$'s outgoing link and the necessary free capacity for the link
in the dependency cycle to enable another update operation at $s$.
An unsplittable deadlock corresponds to the state in which there is a cycle
in the dependency graph where each link has zero residual capacity and it
is not possible to release any capacity from the links in the cycle.

\smartparagraph{Congestion-free heuristic.}
Having categorized the space of possible deadlocks, we now introduce our
scheduling heuristic, called \schedulingheuristic, whose goal is to perform a congestion-free update as fast as possible.
The main goal is to avoid both unsplittable deadlocks, which can only be solved
by violating congestion-freedom, and splittable deadlocks, which require more
iterations to perform an update since flows are not moved in one single phase.

 \schedulingheuristic works as follows. It receives as input an instance of the network
update problem where each flow is already decomposed into segments. Each flow
segmentation is updated with \segmentedmigration, which means that each segment
is updated directly from its old path to the new one if there is enough
capacity. Hence, each segment corresponds to a network update node in the
dependency graph of the input instance. Each switch assigns to every segment
that contains the switch in its new path a priority level based on the following
key structure in the dependency graph. An update node $\MovePath$ in the
dependency graph is {\em critical} at a switch $s$ if $(i)$ $s$ is the first switch of
the segment to be updated, and $(ii)$ executing $\MovePath$ frees some capacity
that can directly be used to execute another update node operation that would
otherwise be not executable (\ie, even if every other update node operation
could be possibly executed). A {\em critical} cycle is a cycle that contains a 
critical update node.

\schedulingheuristic assigns low priority to all the segment (i.e., a network update
node) that do not belong to any cycle in the dependency graph.
These update operations consume useful resources that are needed in order to
avoid splittable deadlocks and, even worse, unsplittable deadlocks, which
correspond to the presence of cycles with zero residual capacities in the 
dependency graph, as previously described.
We assign medium priority to all the remaining segments
that belong only to non-critical cycles, while we assign higher priority to all
the updates that belong to at least one critical cycles. 
This guarantees that updates belonging to a critical cycle are executed
as soon as possible so that the risk of incurring in a splittable or
unsplittable deadlock is reduced. Each switch schedules its network operations
as follows. It only considers segments that needs to be routed through its
outgoing links. Among them, segment update operations with lower priority should
not be executed before operations with higher priority unless a switch detects
that there is enough bandwidth for executing a lower level update operations
without undermining the possibility of executing higher priority updates when
they will be executable. We run a simple Breadth-First-Search (\bfs) tree rooted
at each update operation node to determine which update operations belong to at
least one critical cycle.
In addition to the priority values of each flow, the updates must satisfy the
constraints imposed by \segmentedmigration, if there are any (i.e., there is
at least one {\sc InLoop} segment).

We can prove that \schedulingheuristic is \emph{correct} 
\if \TR 1
  (proof in Appendix~\ref{sec:correctness}),
\else
  (proof in the extended version of this paper~\cite{ez-segway-tr}),
\fi
i.e., 
as long as it there is an executable network update operation there is no congestion
in the network, and
that the worst case complexity for identifying a critical cycle for a
 given update operation is $O(|\MovePathSet|{+}|\LinkSetDependent|{+}|\MovePathSet|{\times}|\LinkSetDependent|){\simeq}O(|\MovePathSet|{\times}|\LinkSetDependent|)$. Consequently, for all update
 operations the complexity is $O(|\MovePathSet|^2{\times}|\LinkSetDependent|)$.

\begin{theorem}\label{theo:deadlock-free}
 \schedulingheuristic is correct for the network update problem.
\end{theorem}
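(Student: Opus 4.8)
The paper defines \schedulingheuristic to be \emph{correct} by the property stated just above the theorem: \emph{as long as there is an executable network update operation there is no congestion in the network}. I read this as a conjunction of two claims that together justify the ``deadlock-free'' label, and I would prove them separately. The \textbf{safety} part asserts that every operation \schedulingheuristic actually performs leaves the network congestion-free, so that \emph{all} intermediate configurations satisfy congestion-freedom. The \textbf{progress} part asserts that \schedulingheuristic does not stall spuriously: it fires some operation whenever a permitted one exists, and it halts before reaching the target only in a \emph{genuine} unsplittable deadlock --- a state of the dependency graph from which no congestion-free move remains. Since black-hole- and loop-freedom are already guaranteed for each individual segment move by Theorems~\ref{theo:basic-migration} and~\ref{theo:segmentation-anomaly-freeness}, congestion-freedom is the only remaining property, and the safety part is exactly what delivers it. I would not attempt to prove completeness on the original instance: \schedulingheuristic is a heuristic and may greedily route a feasible instance into a deadlocked sub-problem, so ``deadlock-free'' here means ``never \emph{self}-deadlocks'', not ``always completes''.

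For \textbf{safety} I would argue by induction on the sequence of operations, maintaining the invariant that the residual capacity of every link stays non-negative (equivalently, no link load exceeds its capacity). The base case is the assumption that the current configuration $\NetworkConfig$ is congestion-free, which is encoded by the non-negative link-capacity labels of the dependency graph $\DependenceGraph{\MovePathSet}{\LinkSetDependent}{E_{free}}{E_{req}}$. For the inductive step the key observation is that \schedulingheuristic gates each operation $\MovePath$ on the availability of its $E_{req}$ demands: it executes $\MovePath$ only when every link $\Link{i}{j}$ with an $E_{req}$ edge into $\MovePath$ currently holds at least the required residual capacity. Because \segmentedmigration installs the new segment before draining the old one, this ``require-before-free'' discipline reserves the new-path capacity while the old path is still loaded, so congestion-freedom holds throughout the transient coexistence of the two paths and not merely at operation boundaries; the subsequent $E_{free}$ release only increases residuals. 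When \schedulingheuristic resorts to splitting, it moves a volume capped at $\min(\text{available},\text{needed})$, so the moved fraction never exceeds what is available and the invariant is preserved. Segmentation is transparent here: each segment is an independent update node updated by \basicmigration, so the per-operation invariant composes over segments.

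For \textbf{progress} I would show that \schedulingheuristic stalls only at a genuine unsplittable deadlock. The priority rule blocks a lower-priority operation only when executing it would consume capacity that a higher-priority operation will require once that operation becomes executable; conversely, whenever a capacity-feasible operation is \emph{priority-safe} (its execution does not endanger any higher-priority operation's future executability), \schedulingheuristic runs it, so it never idles while safe progress is available. Hence a round in which no operation fires, with the update still incomplete, forces the situation in which every capacity-feasible operation is priority-blocked and no higher-priority operation is yet executable. I would then use the critical-cycle classification to pin this down: the \bfs rooted at each update node identifies the operations lying on a critical cycle, and \schedulingheuristic gives them top priority precisely so that cycle-breaking moves are attempted first. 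If the offending cycle still carries positive residual capacity it is a \emph{splittable} deadlock, and the splitting step strictly advances the migration; the only remaining stalls are cycles whose links all have zero residual capacity and into which no $E_{free}$ edge can inject capacity, which is exactly the paper's definition of an unsplittable deadlock and therefore a state admitting no congestion-free continuation. Combined with safety, this yields the soundness phrasing given earlier: \schedulingheuristic produces a complete congestion-free migration only when one exists for the remaining problem.

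The \textbf{main obstacle} is the progress argument, and within it the step that priority-based blocking together with splitting never manufactures a \emph{spurious} stall --- i.e., that every halt is an honest unsplittable deadlock rather than an artifact of greedy, locally-taken decisions. The real difficulty is distribution: each switch schedules only the segments routed through its own outgoing links and assigns priority from a local, partial view of the dependency graph, yet the ``when they will be executable'' look-ahead in the priority rule and the critical-cycle identification are global notions. I would therefore need to show that the local priority decisions compose into a globally consistent schedule with the claimed stall characterization --- most delicately, that a higher-priority operation blocked at one switch does eventually become executable through $E_{free}$ releases driven by operations at other switches, so that blocking defers rather than destroys progress. The safety invariant, by contrast, is purely local (it inspects only the residuals on the links an operation touches) and composes with no such global reasoning, so I expect it to be the routine half of the proof.
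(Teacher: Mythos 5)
Your safety half is, word for word, the paper's entire proof. The paper's argument (Lemma~\ref{lemm:ez-schedule-correctness} in the appendix) is a single sentence: each switch executes an update operation only after checking that the residual capacity of its outgoing link is at least the volume to be moved onto it, and correctness follows ``immediately.'' Your induction over executed operations, with the non-negative-residual invariant, the remark about the transient coexistence of old and new paths under \segmentedmigration, and the observation that splitting moves at most the available capacity, is a more careful rendering of exactly that observation. On this half you and the paper coincide, and the paper agrees with your parenthetical judgment that it is the routine part.

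Where you diverge is in scope, and this is worth flagging. The paper's definition of correctness is only the soundness statement --- the heuristic finds a congestion-free migration \emph{only if} the problem admits one, equivalently: as long as operations are executed, no congestion arises --- and the paper proves nothing beyond that. Your progress obligation (every stall is a genuine unsplittable deadlock; locally assigned priorities compose into a globally consistent schedule; a blocked higher-priority operation eventually becomes executable via $E_{free}$ releases) is not part of Theorem~\ref{theo:deadlock-free} as the paper defines it, is nowhere proven in the paper, and is unlikely to be provable as stated: the heuristic is greedy and distributed, and the paper's own evaluation detects deadlocks by a 150\,ms timeout and reports hundreds of splittable deadlocks encountered in practice --- the system is engineered to recover from stalls, not to exclude them. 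So the ``main obstacle'' you identify is one you created by strengthening the claim; for the theorem as stated, your routine half is the whole proof, and the progress half can simply be dropped. The one thing your reading buys is clarity about what the theorem does \emph{not} assert --- a reader could otherwise mistake ``correct'' for ``always completes'' --- but that belongs in commentary, not in the proof.
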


\subsection{Distributed coordination }
\label{sec:coordination}

We now describe the mechanics of coordination during network updates.

\smartparagraph{First phase: the centralized computation.}
 As described in Sect.~\ref{sec:reliability-segmentation}, to avoid black-holes
 and forwarding loops, each segment
can be updated using a \basicmigration technique unless there are {\sc InLoop} 
segments.
These dependencies are computed by the controller in the initialization phase and
transmitted to each switch in order to coordinate the network update correctly.

As described in Sect.~\ref{sec:ez-segway-scheduling}, the scheduling
mechanism assigns one out of three priority levels to each segment update
operation. The centralized controller is responsible for performing this
computation and sending to each switch a \MsgInstallUpdate message that
encodes the segment update operations (and their dependencies) that must be
handled by the switch itself in addition to the priority level of the segment 
update operations.

\smartparagraph{Second phase: the distributed computation.}
 Each switch $s$ receives from the centralized controller the following information
regarding each single segment $S$ that traverses $s$ in the new path:
its identifier, its priority level
(i.e., high, medium, or low), its amount of traffic volume, the identifier of the switch that precedes (succeeds) it along the new and old path, 
and whether the receiving switch is the initial or
final switch in the new path of $S$ and in the old and new path of the 
flow that contains $S$ as a segment. If the segment is
of type {\sc InLoop}, the identifier of the segment that has to be updated 
before this one is also provided.  Each switch in addition knows the capacity of its outgoing
links and maintains memory of the amount of capacity that is used by the flows
at each moment in time.

Upon receiving this information from the controller, each switch
performs the following initial actions for each segment $S$ that
it received: it installs the new path and it removes the old path. The messages exchanged by the switches for performing
these three operations  are described in detail the next three paragraphs. We
want to stress the fact that all the functionalities executed in the switches
consist of simple message exchanges and basic ranking of update operations that
are computationally inexpensive and easy to implement in currently available
programmable switches. Those operations are similar to those performed by MPLS 
to install labels in the switches~\cite{rfc-3209}. Yet, MPLS does not provide any 
mechanism to schedule the path installation in a congestion-free manner.

\smartparagraph{Installing the new path of a segment. }
The installation of the new path is performed by iteratively reserving along the reversed
new path the bandwidth required by the new flow.
The last switch on the new path of segment $S$ sends a \MsgGoodToMove message to 
his predecessor along the new path. 
The goal of this message is to acknowledge the receiver that the downstream
path is set up to receive the traffic volume of $S$.
Upon receiving a \MsgGoodToMove message for a segment $S$, a switch
checks if there is enough bandwidth on the outgoing link to execute the 
update operation. If not, it waits that enough bandwidth will be available when
some flows will be removed from the outgoing link. In that case, it checks if there are segment update operations
that require the outgoing link and have higher priority than $S$. If 
not, the switch executes the update operation. Otherwise, it checks whether the residual capacity
of the link minus the traffic volume of the segment is enough to
execute in the future all the higher priority update operations. In that case, the switch 
executes the update operations.
 If the switch successfully performs the update operation, it updates the residual capacity 
of the outgoing link and it sends a \MsgGoodToMove message to its predecessor
along the new path of $S$. If the switch has no predecessor along the new path of
$S$, i.e., it is the first switch of the new path of $S$, it sends a \MsgRemoving message
to its successor in the old path.
 If the receiving switch is the last switch of an {\sc InLoop} segment $S'$, it sends a 
\MsgGoodToMove message to its predecessor on $dep(S')$.

\smartparagraph{Removing the old path of a segment. }
Upon receiving a \MsgRemoving message for a segment $S$, if the receiving switch
 is not a switch in common with the new path that has not yet installed the new flow entry,
 it removes the entry for the old path and it forwards the message to its successor in the old path.
 Otherwise, it puts on hold the \MsgRemoving message until it installs the new path.
 If the switch removed the old path, it updates the capacity of its outgoing links and checks whether there 
 was a dependency between 
 the segment that was removed and any segment that can be executed at the 
 receiving switch. In that case, it executes these update operations according
 to their priorities and the residual capacity (as explained above) and propagates
 the \MsgGoodToMove that were put on hold.

\subsection{Dealing with failures}

While the network update process must deal with link and switch failures, we
argue that doing so from within the switches simply introduces unwarranted
complexity. Thus, we deliberately avoid dealing with failures in our distributed
coordination update mechanism.

As with a centralized approach, if a switch or link fails during an update, a
new target configuration must be computed. Recall that the \ezsegway is
responsible for updating from an initial configuration to the final one but not
for computing them.
We believe that controller is the best place to re-compute a new global desired
state and start a new update. Note that in the presence of a switch or link
failure, our update process stops at some intermediate state. Once the
controller is notified of the failure,it halts the updates and queries the switches to 
know which update operations were performed and uses this information to 
reconstruct the current network state and compute the new desired one.

This process can  be optimized to minimize recovery delay. A natural
optimization is to have switches asynchronously sending a notification to the
controller upon performing an update operation, thus enabling the controller to
keep closer in sync with the data plane state.

As for the messages between switches, we posit that these packets are sent with
the highest priority so that they are not dropped due to congestion and that
their transmission is retried if a timeout expires before an acknowledgment is
received. When a message is not delivered for a maximum number of times, we
behave as though the link has failed.
\section{Implementation}
\label{sec:implementation}

We implemented an unoptimized \ezsegway prototype written as 6.5K LoC in Python.
The prototype consists of a \emph{global controller} that runs centrally as a
single instance, and a \emph{local controller} that is instantiated on every
switch. The local controller is built on top of the Ryu controller~\cite{ryu}. The 
local controller, which executes our
\schedulingheuristic algorithm, connects and manipulates the data-plane state
via OpenFlow, and sends/receives UDP messages to/from other switches. Moreover,
the local controller communicates with the global controller to receive the
pre-computed scheduling information messages and to send back an acknowledgment
once an update operation completes.

\section{Prototype Evaluation}
\label{sec:evaluation}

We now evaluate the update time performance of our prototype through
emulation in Mininet~\cite{mininet-hifi}.
We compare \ezsegway with a \textit{Centralized} approach, inspired by 
\dionysus~\cite{Jin+DSN+2014}, that runs the same
scheduling algorithm of \ezsegway but coordination among the switches is
delegated to a central controller. To make the comparison fair, we 
empower Centralized with segmentation and splitting volume support, both of
which do not exist in \dionysus.

\smartparagraph{Experimental setup.} We run all our experiments on a dedicated server with 16 cores at
2.60~GHz with hyper-threading, 128~GB of RAM and Ubuntu Linux 14.04. The computation in the central controller is parallelized across all cores. Deadlocks are detected by means of a timeout, which we set to 150 ms.

 We consider two real WAN topologies: $B4$~\cite{b4} -- the
 globally-deployed software defined WAN at Google --  
 and the layer 3 
 topology of the $Internet2$ network~\cite{Internet2}. 
Without loss of generality, we assume
link capacities of 1~Gbps. We place the controller at the centroid switch of
the network, \ie, the switch that minimizes the maximum latency towards the
farthest switch. We compute the latency of a link based on the
geographical distance between its endpoints and the 
signal propagation speed through optical fibers (\ie, $\sim$200,000~km/s).

\begin{figure}[!t]
	\centering
	\includegraphics[width=0.48\textwidth]{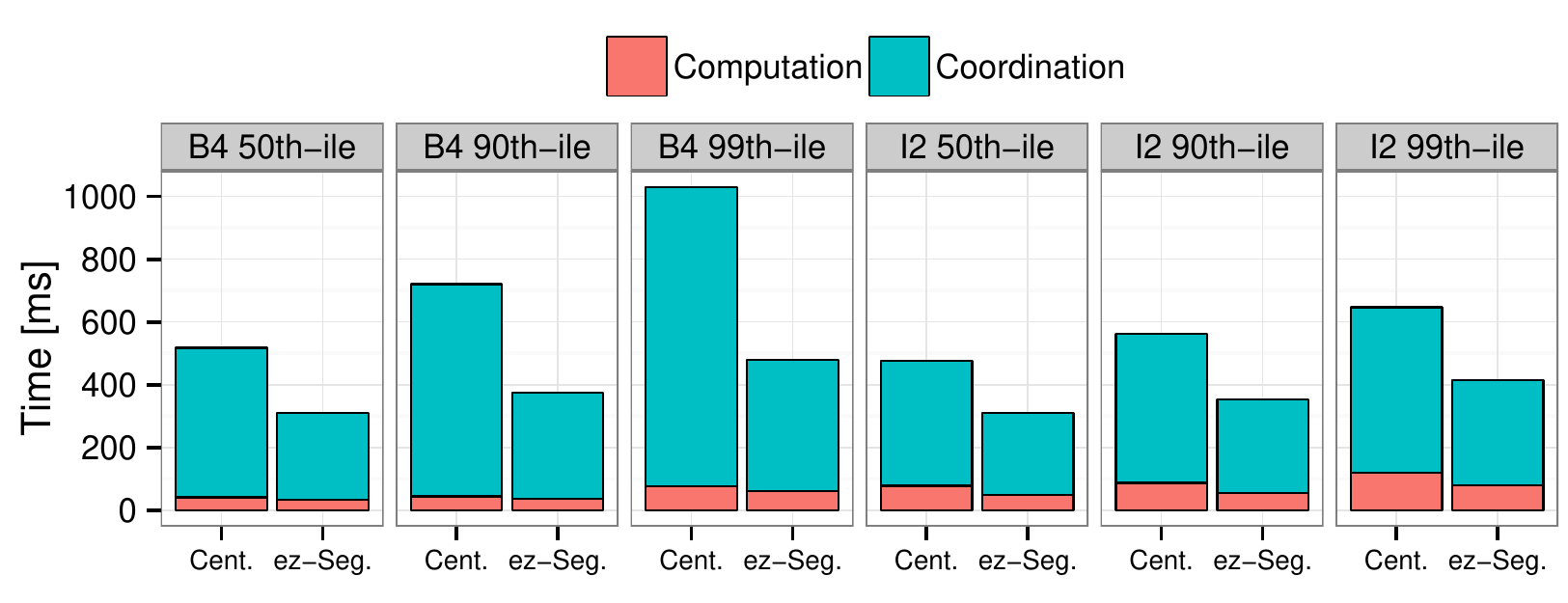}
	\caption{Update time of \ezsegway~versus a Centralized approach for 
		$B4$ and $Internet2$.}
	\label{fig:update_time_exp}
\end{figure}

Similarly to the methodology in~\cite{He.CoNext07}, we generate all flows of a 
network
configuration by selecting non-adjacent source and destination switch pairs at
random and assigning them a traffic volume generated according to the gravity
model~\cite{Roughan.SIGCOMM05}. For a given source-destination pair 
$(s,d)$, we compute $3$ paths from $s$ to $d$ and equally spread the 
$(s,d)$ traffic volume among these $3$ paths. To compute each path, we first 
select a third transit node $t$ at random and then 
we compute a
cycle-free shortest path from $s$ to $d$ that traverses $t$. If it does not exist, 
we
choose a different random transit node. We guarantee that the latency of the
chosen paths is at most a factor of $1.5$ greater than the latency of the
source-destination shortest path. Starting from an empty 
network configuration that
does not have any flow, we generate $1,000$ continuous network 
configurations. If the volumes of the flows in a configuration exceed the 
capacity of at least one link, we iteratively remove flows until the 
remaining ones fit within the given link capacities. 
The resulting network updates consist of a modification of at least 
(at most) $176$ and $188$ ($200$ and $276$) flows in B4 and 
Internet2, respectively. 
The experiment goes through these $1,000$ sequential update configurations 
and we measure both the time for completing each network update and each 
individual flow update. The
network update completion time corresponds to the slowest flow completion time
among the flows that were scheduled in the update.

We break down the update time by the amount of computation performed at the
central controller for the scheduling \textit{computation} of the network updates and
due to \textit{coordination} time among the switches to perform the network update.
Our coordination time is measured as the interval between the time when the
first controller-to-switch is sent until when the last switch notifies the controller
that it has completed all update operations.
We report the 50$^{th}$, 90$^{th}$, and 99$^{th}$ percentile update time across all the update configurations.

\smartparagraph{\ezsegway outperforms Centralized.}
Figure~\ref{fig:update_time_exp} shows the results for the update 
completion time. For B4, \ezsegway is $45\%$ faster than Centralized
in the median case and $57\%$ faster for the $99^{th}$ percentile of updates, 
a crucial 
performance gain for the scenarios considered.  We observed a 
similar trend for I2, where the completion time is  $38\%$ shorter than 
Centralized at the $90^{th}$ percentile.
It should be noted that we have adopted almost worst conditions for
\ezsegway in this comparison in that the central controller is ideally placed at the network centroid.
Importantly, our approach is able to complete all the network updates, despite
overcoming $203$ and $68$ splittable deadlocks in B4 and Internet2,
respectively. Also, \ezsegway exchanges 35\% of the messages sent by
Centralized.

\smartparagraph{Controller-to-switch communication is the 
Centralized bottleneck.} We note that the computation time at the controller 
represents a small
fraction of the overall update time (\ie, $\le20\%$), showing that the main
bottleneck of a network update is the coordination time among 
the network devices, where \ezsegway significantly outperforms 
Centralized. As we show later with micro-benchmarks on a real switch in
\S\ref{sec:micro-benchmark}, the computation within the switch is in the order
of $3{-}4$ ms. This means that our approach is essentially limited by the
propagation latency of the speed of light, which cannot be improved with better scheduling algorithms.

\begin{figure}[!t]
\centering
		\includegraphics[width=0.48\textwidth]{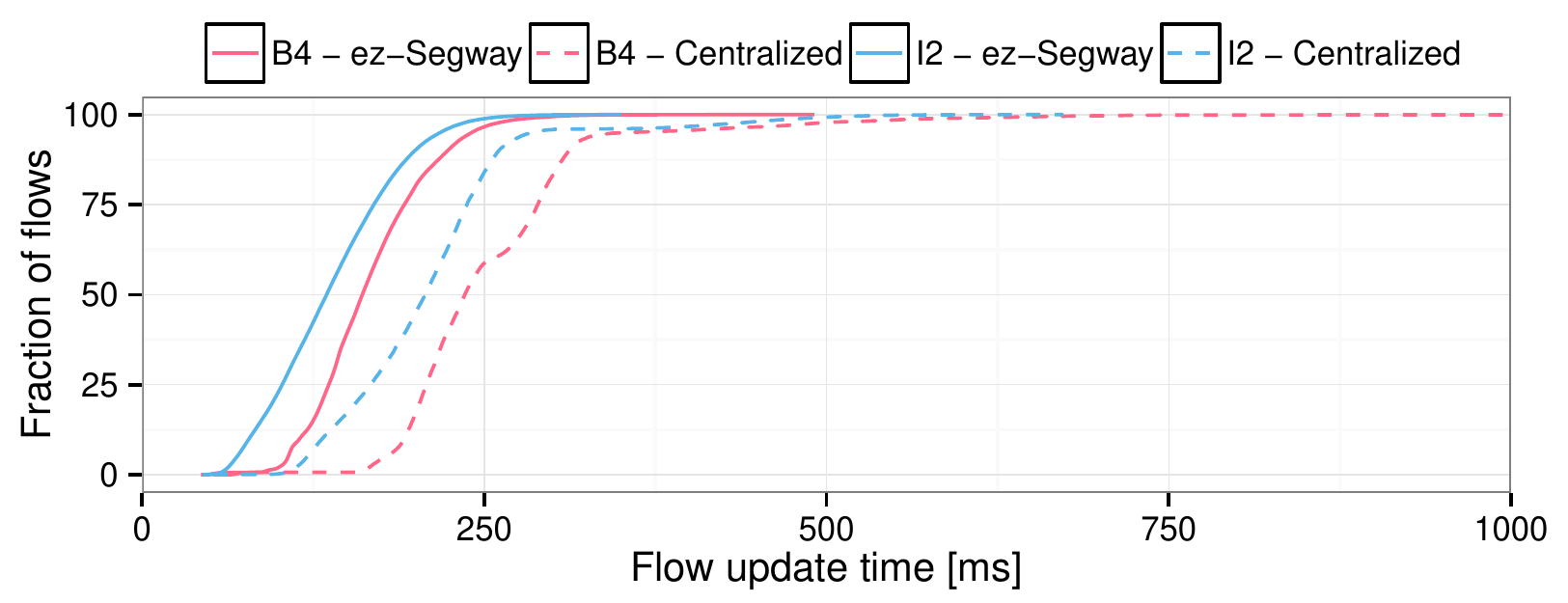}
	\caption{Flow update time of \ezsegway~versus a Centralized approach 
	for $B4$ and $Internet2$.}
	\label{fig:flow_update_time}
\end{figure}

\begin{figure*}[t!]
	\begin{center}
		\includegraphics[width=0.95\textwidth]{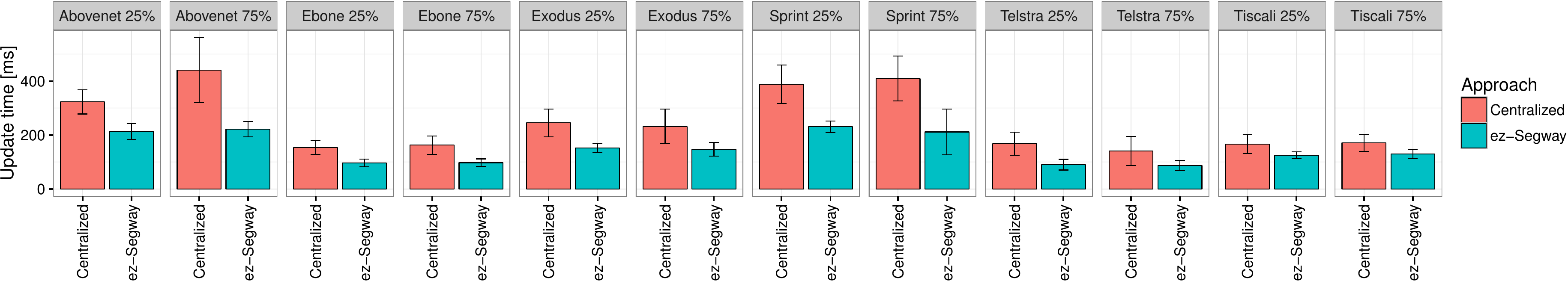}
	\end{center}
	\caption{Update time of \ezsegway versus Centralized for various simulation settings.}
	\label{fig:update_time}
\end{figure*}

\smartparagraph{\ezsegway speeds up the slowest flows.}
Figure~\ref{fig:flow_update_time} shows the distribution of the update completion time of
individual flows across all network updates. We observe that \ezsegway (solid lines) 
not only consistently outperforms the
centralized approach (dashed lines) in both B4 and Internet2, but it also reduces the long-tail visible in Centralized, 
a fundamental improvement to network's responsiveness and recovery after failures.
We observed that the maximum flow update time is $456$~ms in
\ezsegway and $1634$~ms in Centralized for B4
and $350$~ms in \ezsegway and
$673$~ms in Centralized for Internet2. The mean flow update time is $26\%$ 
slower and $51\%$ slower with Centralized for both B4 and Internet2, 
respectively.

\begin{table}[t]
	\vspace{.5em}
	\centering
	\begin{tabular}{|l||c|c|}
		\hline			
		Overhead & Mean & Max. \\
		\hline
		Rule & $1.37(\pm 0.57)$ & $4$ \\
		\hline
		Message & $3.16(\pm3.62)$ & $41$ \\
		\hline  
	\end{tabular}
	\caption{Overheads for flow splitting operations.}\label{tbl:overheads} 
\end{table}

\smartparagraph{\ezsegway overcomes deadlocks with low overheads.}
We now measure the cost of resolving a deadlock by splitting flows in terms of
forwarding table overhead and message exchange overhead in the same experimental
setting used so far. Splitting flows increases the number of forwarding entries
since both the new and the old flow entries must be active at the same time in
order to split a flow among the two of them. Moreover, splitting flows requires
coordination among switches, which leads to an overhead in the message
communication. We compare \ezsegway against a version of \ezsegway that does not
split flows and completes an update by congesting links. 
Table~\ref{tbl:overheads} shows that the average (maximum) number
of additional forwarding entries that are installed in the switches is $1.37$ 
($4$). As for the message exchange overhead, we observe that the average  
(maximum) number of
additional messages in the entire network is $3.16$ ($41$).
Thus, we conclude that the overheads are negligible.

\section{Large-scale simulations}
\label{sec:simulation}

We turn to simulations to study the performance of our approach on large 
scale Internet topologies.
We compare \ezsegway against the Centralized \dionysus-like approach, which we 
defined in \S\ref{sec:evaluation}.
We measure the total update time to install updates on $6$ real topologies
annotated with link delays and weights as inferred by the RocketFuel
engine~\cite{Mahajan.IMW02}. We set link capacities between $1{\sim}100$ Gbps,
inversely proportional to weights. We place the controller at the centroid
switch.

We generate flows using the same approach as in our prototype experiments.
We run simulations for a number of flows varying from 500 to 1,500 and
report results for 1,000 flows as we observed \emph{qualitatively} similar
behaviors. Since the \emph{absolute} values of 
network update time strongly depend 
on the number of flows, we focus on the performance gain factor.
We generate updates 
by simulating link failures that
cause a certain percentage $p$ of flows to be rerouted along new shortest
paths. We ran experiments for $10\%$, $25\%$, $50\%$, and $75\%$; for brevity,
we report results for $25\%$ and $75\%$. For every setting of topology, flows,
and failure rate, we generate $10$ different pairs of old and new network
configurations, and report the average update completion time and its 
standard deviation. 
Figure~\ref{fig:update_time} shows our results, which demonstrate that
\ezsegway reduces the update completion time by a factor of $1.5-2$. 
In practice, the ultimate gains of using \ezsegway will depend on specific 
deployment scenarios (e.g., WAN, datacenter) and might not be significant in 
absolute terms in some cases.

\section{Hardware Feasibility}
\label{sec:micro-benchmark}

\smartparagraph{OpenFlow.}
To assess the feasibility of deploying \ezsegway on real programmable switches,
we tested our prototype on a Centec $V580$ switch~\cite{centec}, which runs a low-power
$528$MHz core Power PC with $1$GB of RAM, wherein we execute the \ezsegway
local-controller. Our code runs in Python and is not
optimized. We observe that executing the local controller in the switch required
minimal configuration effort as the code remained unchanged.

We first run a micro-benchmark to test the performance of the switch for running
the scheduling computation. Based on Internet2 topology tests, we
extract a sequence of $79$ updates with $50$ flows each, that we send to the
switch while we measure compute time. We measure a mean update processing time
of $3.4$ ms with $0.5$ ms standard deviation. This indicates that our approach
obtains satisfactory performance despite the low-end switch CPU. Moreover, the
observed CPU utilization for handling messages is negligible.

As a further thought for optimizing \ezsegway, we asked ourselves whether
current switches are suitable for moving the scheduling computation from the
centralized controller to the switches themselves. This would be a natural and
intuitive optimization since our scheduling algorithm consists of an iterative
computation of the update operations priorities for each switch. This would be
beneficial to the system when the number of switches in the network is much
larger than the number of cores in the central controller. We found that
processing the dependency graph for the same sequence of 79 updates of the
Internet2 topology with $\sim$200 flows in the entire network with our
heuristic takes on avarage 22 ms with 6.25 ms standard deviation.

\smartparagraph{P4.}
To further illustrate the feasibility of realizing simple network update
functionality in switches, we explored implementing \ezsegway in
P4~\cite{p4}. We regard this as a thought experiment that finds a 
positive answer. However, the need to delegate update functions to the switch
data plane is unclear since the main performance bottleneck in our results is
due to coordination.

Using P4, we are able to express basic \ezsegway functionality as a set of
handlers for \MsgInstallUpdate, \MsgGoodToMove, and \MsgRemoving messages. The
\MsgInstallUpdate handler receives a new flow update message and updates the
switch state. If the switch is the last switch on the new path, it performs
the flow update and sends a \MsgGoodToMove message to its predecessor. Upon
receiving a \MsgGoodToMove message, a switch checks whether the corresponding
update operation is allowed by ensuring that the available capacity on the new
path is higher than the flow volume. In that case, the switch performs the
update operation, updates the available capacity, and forwards the
\MsgGoodToMove message to its predecessor (until the first switch is reached).
When the lack of available capacity prevents an update operation, the switch
simply updates some bookkeeping state to be used later. Once a
\MsgGoodToMove message is received, the switch also attempts to perform any
update operation which was pending due to unavailable capacity. Upon receiving
a \MsgRemoving message, the switch simply uninstalls the flow state and
propagates the message to the downstream switch till the last one. For sake of
presentation, the above does not describe priority handling.

While the above functionality is relatively simple, some of the language
characteristics of P4 make it non-trivial to express it. P4 does not have
iteration, recursion, nor queues data structures. Moreover, P4 switches
cannot craft new messages; they can only modify fields in the header
of the packet that they are currently processing~\cite{Dang.CCR16}. To
overcome these limitations of the language, we perform loop unrolling of the
\ezsegway logic and adopt a header format that contains the union of all
fields in all \ezsegway messages.
\if \TR 1
 We report the detailed P4 instructions needed to implement \ezsegway in Appendix~\ref{sec:p4-implementation}.
\else
 More details on our P4 implementation appear in the extended version of this paper~\cite{ez-segway-tr}.
\fi

\section{Related work}

The network update problem has been widely studied in the recent
literature~\cite{Jin+DSN+2014, Reitblatt+ANU+2012, Liu+ZUD+2013,
McClurg.PLDI15, Peresini.HOTSDN14, katta2013incremental, Canini.INFOCOM15,
cupid-infocom-2016, Brandt2016On, flip-infocom-16, timeflip, time4-infocom-16}. These
works use a centralized architecture in which an SDN
controller computes the sequence of update operations and actively coordinates
the switches to perform these operations. The work in~\cite{time4-infocom-16}
relies on clock-synchronization for performing fast updates, which are, however,
non-consistent due to synchronization imprecisions and non-deterministic delays
in installing forwarding rules~\cite{Jin+DSN+2014, kuzniar-pam-15, DBLP:conf/sosr/HeKGDPALT15}.
In contrast, \ezsegway speeds up network updates by delegating the task of
coordinating the network update to the switches. To the best of our knowledge,
\ezsegway is the first work that tackles the network update problem in a
decentralized manner.

In a sense, our approach follows the line of recent proposals to centrally
control distributed networks like Fibbing~\cite{fibbing-2015}, where the
benefits of centralized route computation are combined with the reactiveness of 
distributed approaches. We discuss below the most relevant work with
respect to our decentralized mechanism.

\dionysus~\cite{Jin+DSN+2014} is centralized scheduling algorithm that updates
flows atomically (\ie, no traffic splitting). \dionysus computes a graph to
encode dependencies among update operations. This dependency graph is used by
the controller to perform update operations based on dynamic conditions of the
switches. While \ezsegway also relies on a similar dependency graph, the
controller is only responsible for constructing the graph whereas the switches
use it to schedule update operations in a decentralized fashion.

Recent work~\cite{cupid-infocom-2016} independently introduced flow segmentation
and traffic splitting techniques similar to ours~\cite{Nguyen.ANRW16}. However,
their work and problem formulation focuses on minimizing the risk of incurring
in a deadlock in the centralized setting. In contrast, \ezsegway develops flow
segmentation to increase the efficiency of network update completion time.

From an algorithmic perspective, \cite{Jin+DSN+2014} and~\cite{Brandt2016On}
showed that, without the ability to split a flow, several formulations of the
network update problem are NP-hard. The work in~\cite{Brandt2016On} is the only one that
provides a poly-time algorithm that is correct and complete for the
network update problem. However, their model allows flows to be moved on any
possible intermediate path (and not only the initial and final one). Moreover,
there are several limitations. First, the complexity of the algorithm is too
high for practical usage. Consequently, the authors do not evaluate their
algorithm. Last, this work also assumes a centralized
controller that schedules the updates.

\section{Conclusion}
This paper explored delegating the responsibility of executing consistent
updates to the switches. We proposed \ezsegway, a mechanism that achieves faster
completion time of network updates by moving simple, yet clever coordination
operations in the switches, while the more expensive computations are performed
in the centralized controller. Our approach enables switches to engage as active
actors in realizing updates that provably satisfy three properties: black-hole
freedom, loop freedom, and congestion freedom.

In practice, our approach leads to improved update times, which we quantified
via emulation and simulation on a range of network topologies and traffic
patterns. Our results show that \ezsegway improves network update times
by up to $45\%$ and $57\%$ at the median and the $99^{th}$ percentile, 
respectively.  We also
deployed our approach on a real OpenFlow switch to demonstrate the feasibility
and low computational overhead, and implemented the switch functionality in P4.

\vspace{1pt}
{\footnotesize
\smartparagraph{Acknowledgements.}
\if \TR 0
We would like to thank the anonymous reviewers and our shepherd Jia Wang for their comments.
\fi
We are thankful to Xiao Chen, Paolo Costa, Huynh Tu Dang, Petr Kuznetsov, Ratul Mahajan,
Jennifer Rexford, Robert Soul\'{e}, and Stefano Vissicchio for their helpful
feedback on earlier drafts of this paper. This research is (in part) supported
by European Union's Horizon 2020 research and innovation programme under the
ENDEAVOUR project (grant agreement 644960).}

\bibliographystyle{abbrv} 

\begin{thebibliography}{10}

\bibitem{rfc-3209}
R.~3209.
\newblock {RSVP-TE: Extensions to RSVP for LSP Tunnels}.

\bibitem{p4}
P.~Bosshart, D.~Daly, G.~Gibb, M.~Izzard, N.~McKeown, J.~Rexford,
  C.~Schlesinger, D.~Talayco, A.~Vahdat, G.~Varghese, and D.~Walker.
\newblock {P4: Programming Protocol-Independent Packet Processors}.
\newblock {\em SIGCOMM Comput. Commun. Rev.}, 44(3), July 2014.

\bibitem{Brandt2016On}
S.~Brandt, K.-T. F\"{o}rster, and R.~Wattenhofer.
\newblock {On Consistent Migration of Flows in SDNs}.
\newblock In {\em INFOCOM}, 2016.

\bibitem{Canini.INFOCOM15}
M.~Canini, P.~Kuznetsov, D.~Levin, and S.~Schmid.
\newblock {A Distributed and Robust SDN Control Plane for Transactional Network
  Updates}.
\newblock In {\em INFOCOM}, 2015.

\bibitem{centec}
{Centec Networks}.
\newblock {GoldenGate 580 Series}.
\newblock \url{http://www.centecnetworks.com/en/SolutionList.asp?ID=93}.

\bibitem{Dang.CCR16}
H.~T. Dang, M.~Canini, F.~Pedone, and R.~Soul{\'{e}}.
\newblock {Paxos Made Switch-y}.
\newblock {\em SIGCOMM Comput. Commun. Rev.}, 46(2), Apr 2016.

\bibitem{survey-network-updates}
K.~Foerster, S.~Schmid, and S.~Vissicchio.
\newblock {Survey of Consistent Network Updates}.
\newblock {\em CoRR}, abs/1609.02305, 2016.

\bibitem{ICONA}
M.~Gerola, F.~Lucrezia, M.~Santuari, E.~Salvadori, P.~L. Ventre, S.~Salsano,
  and M.~Campanella.
\newblock {ICONA: a Peer-to-Peer Approach for Software Defined Wide Area
  Networks using ONOS}.
\newblock In {\em EWSDN}, 2016.

\bibitem{He.CoNext07}
J.~He, M.~Suchara, M.~Bresler, J.~Rexford, and M.~Chiang.
\newblock {Rethinking Internet Traffic Management: From Multiple Decompositions
  to a Practical Protocol}.
\newblock In {\em CoNEXT}, 2007.

\bibitem{DBLP:conf/sosr/HeKGDPALT15}
K.~He, J.~Khalid, A.~Gember{-}Jacobson, S.~Das, C.~Prakash, A.~Akella, L.~E.
  Li, and M.~Thottan.
\newblock {Measuring Control Plane Latency in SDN-Enabled Switches}.
\newblock In {\em SOSR}, 2015.

\bibitem{mininet-hifi}
B.~Heller, N.~Handigol, V.~Jeyakumar, B.~Lantz, and N.~McKeown.
\newblock {Reproducible Network Experiments using Container Based Emulation}.
\newblock In {\em CoNEXT}, 2012.

\bibitem{hong13}
C.-Y. Hong, S.~Kandula, R.~Mahajan, M.~Zhang, V.~Gill, M.~Nanduri, and
  R.~Wattenhofer.
\newblock {Achieving High Utilization with Software-Driven WAN}.
\newblock In {\em SIGCOMM}, 2013.

\bibitem{Internet2}
Internet2.
\newblock {Internet2 Network Infrastructure Topology}.
\newblock
  \url{https://www.internet2.edu/media/medialibrary/2015/08/18/Internet2-Network-Infrastructure-Topology-201508.pdf}.

\bibitem{b4}
S.~Jain, A.~Kumar, S.~Mandal, J.~Ong, L.~Poutievski, A.~Singh, S.~Venkata,
  J.~Wanderer, J.~Zhou, M.~Zhu, J.~Zolla, U.~H\"{o}lzle, S.~Stuart, and
  A.~Vahdat.
\newblock {B4: Experience with a Globally-Deployed Software Defined WAN}.
\newblock In {\em SIGCOMM}, 2013.

\bibitem{Jin+DSN+2014}
X.~Jin, H.~H. Liu, R.~Gandhi, S.~Kandula, R.~Mahajan, M.~Zhang, J.~Rexford, and
  R.~Wattenhofer.
\newblock {Dynamic Scheduling of Network Updates}.
\newblock In {\em SIGCOMM}, 2014.

\bibitem{Kang.CoNEXT15}
N.~Kang, M.~Ghobadi, J.~Reumann, A.~Shraer, and J.~Rexford.
\newblock {Efficient Traffic Splitting on Commodity Switches}.
\newblock In {\em CoNEXT}, 2015.

\bibitem{katta2013incremental}
N.~P. Katta, J.~Rexford, and D.~Walker.
\newblock {Incremental Consistent Updates}.
\newblock In {\em HotSDN}, 2013.

\bibitem{kuzniar-pam-15}
M.~Ku\'zniar, P.~Pere\v{s}\'ini, and D.~Kosti\'c.
\newblock {What You Need to Know About SDN Flow Tables}.
\newblock In {\em PAM}, 2015.

\bibitem{Levin.ATC14}
D.~Levin, M.~Canini, S.~Schmid, F.~Schaffert, and A.~Feldmann.
\newblock {Panopticon: Reaping the Benefits of Incremental SDN Deployment in
  Enterprise Networks}.
\newblock In {\em USENIX ATC}, 2014.

\bibitem{Liu+ZUD+2013}
H.~H. Liu, X.~Wu, M.~Zhang, L.~Yuan, R.~Wattenhofer, and D.~Maltz.
\newblock {zUpdate}: {Updating Data Center Networks with Zero Loss}.
\newblock In {\em SIGCOMM}, 2013.

\bibitem{Ludwig+PODC2015}
A.~Ludwig, J.~Marcinkowski, and S.~Schmid.
\newblock {Scheduling Loop-free Network Updates: It's Good to Relax!}
\newblock In {\em PODC}, 2015.

\bibitem{Mahajan.IMW02}
R.~Mahajan, N.~T. Spring, D.~Wetherall, and T.~E. Anderson.
\newblock {Inferring Link Weights using End-to-End Measurements}.
\newblock In {\em IMW}, 2012.

\bibitem{Mahajan.HotNets.2013}
R.~Mahajan and R.~Wattenhofer.
\newblock {On Consistent Updates in Software Defined Networks}.
\newblock In {\em HotNets}, 2013.

\bibitem{McClurg.PLDI15}
J.~McClurg, H.~Hojjat, P.~\v{C}ern\'y, and N.~Foster.
\newblock {Efficient Synthesis of Network Updates}.
\newblock In {\em PLDI}, 2015.

\bibitem{time4-infocom-16}
T.~Mizrahi and Y.~Moses.
\newblock {Software Defined Networks: It's About Time}.
\newblock In {\em INFOCOM}, 2016.

\bibitem{timeflip}
T.~Mizrahi, O.~Rottenstreich, and Y.~Moses.
\newblock {TimeFlip: Scheduling Network Updates with Timestamp-based TCAM
  Ranges}.
\newblock In {\em INFOCOM}, 2015.

\bibitem{Nguyen.ANRW16}
T.~D. Nguyen, M.~Chiesa, and M.~Canini.
\newblock {Towards Decentralized Fast Consistent Updates}.
\newblock In {\em ANRW}, 2016.

\bibitem{Peresini.HOTSDN14}
P.~Pere\v{s}\'ini, M.~Ku\'zniar, M.~Canini, and D.~Kosti\'c.
\newblock {ESPRES: Transparent SDN Update Scheduling}.
\newblock In {\em HotSDN}, 2014.

\bibitem{Reitblatt+ANU+2012}
M.~Reitblatt, N.~Foster, J.~Rexford, C.~Schlesinger, and D.~Walker.
\newblock {Abstractions for Network Update}.
\newblock In {\em SIGCOMM}, 2012.

\bibitem{Roughan.SIGCOMM05}
M.~Roughan.
\newblock {Simplifying the Synthesis of Internet Traffic Matrices}.
\newblock {\em SIGCOMM Comput. Commun. Rev.}, 35(5):93--96, Oct. 2005.

\bibitem{ryu}
{Ryu SDN Framework}.
\newblock {\url{http://osrg.github.io/ryu/}}.

\bibitem{Soule.CONEXT14}
R.~Soul{\'e}, S.~Basu, P.~J. Marandi, F.~Pedone, R.~Kleinberg, E.~G. Sirer, and
  N.~Foster.
\newblock {Merlin: A Language for Provisioning Network Resources}.
\newblock In {\em CoNEXT}, 2014.

\bibitem{flip-infocom-16}
S.~Vissicchio and L.~Cittadini.
\newblock {FLIP the (Flow) Table: Fast LIghtweight Policy-preserving SDN
  Updates}.
\newblock In {\em INFOCOM}, 2016.

\bibitem{fibbing-2015}
S.~Vissicchio, O.~Tilmans, L.~Vanbever, and J.~Rexford.
\newblock {Central Control Over Distributed Routing}.
\newblock In {\em SIGCOMM}, 2015.

\bibitem{cupid-infocom-2016}
W.~Wang, W.~He, J.~Su, and Y.~Chen.
\newblock {Cupid: Congestion-free Consistent Data Plane Update In Software
  Defined Networks}.
\newblock In {\em INFOCOM}, 2016.

\end{thebibliography}
\if \TR 1
\begin{footnotesize}

\end{footnotesize}
\else

\fi

\if \TR 1
\appendix
{
\newcommand{\rephrase}[3]{\noindent\textbf{#1 #2}.~\emph{#3}}

\section{Correctness}
\label{sec:correctness}

We first remind the three properties that \ezsegway aim to preserve during a network update:

\begin{itemize}
	\item {\em black-hole freedom}: For any flow, no packet is unintendedly dropped in the network.
	\item {\em loop-freedom}: No packet should loop in the network.
	\item {\em congestion-freedom}: No link has to carry a traffic greater than its capacity.
\end{itemize}

\spacebeforeparagraph

\smartparagraph{Notation and terminology. }
We first introduce some useful notation and terminology. Let $u_0,\dots,u_k$ be the sequence of $k$ update operations executed during a network update and et $t_0,\dots,t_k$ be the
sequence of time instants where these operations are performed, with $t_i\le t_j$ for any $i<j$. 
Since each switch performs a single update operations at a time, 
w.l.o.g., we assume that at each time instant only one operation is
performed. We assume that, at time $t_0$, the controller sends the precomputed information to all the switches.
	Let $F$ be a flow from a switch $s_i$ to $s_f$ that is being updated
	from an old path $P_{old}=(s_o^1\dots s_o^n)$ to a new path $P_{new}=(s_n^1\dots s_n^m)$. 
	For each switch $s \in \NwSwitchSet$, let $next(F,s,t_j)$ be
	the next-hop of $F$ according to the rules installed at $s$ at time $t_j$, where $next(F,s,t_j)=\epsilon$ if there is
	no rule installed for that specific flow. Let $path\_forwarding(F,s,t_j)$ be the (possibly infinite) sequence of switches that is
	traversed by a packet belonging to $F$ that is received by switch $s$ at time $t_j$. Observe that the incoming port of a packet is not relevant in \ezsegway.

Let $prec_{old}(s)$ ($prec_{new}(s)$)  be the predecessor vertex of $s$
in $P_{old}$ ($P_{new}$) and $succ_{old}(s)$ ($succ_{new}(s)$)  be the successor vertex of $s$
in $P_{old}$ ($P_{new}$).
Let $prec_{old}^*(s)$ ($prec_{new}^*(s)$) be the set of switches between  $s_i$
and $s$ in $P_{old}$ ($P_{new}$) and   let $succ_{old}^*(s)$ ($succ_{new}^*(s)$) be the 
set of switches between $s_f$ and $s$.

\subsection{Basic-Migration Correctness}

We first prove that the \basicmigration operation is guaranteed to complete in the absence of congestion. Let $F$ be any flow that has to be updated from $P_{old}$ to $P_{new}$.

\begin{lemma}\label{lemm:basic-migration-1}
	 \basicmigration guarantees that every switch on the new path eventually receives a \MsgGoodToMove message.
\end{lemma}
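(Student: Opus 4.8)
The plan is to prove the claim by \emph{reverse induction} along the new path, tracking the propagation of the \MsgGoodToMove message from the egress switch back towards the ingress. Write $P_{new}=(s_n^1\dots s_n^m)$, so that $s_n^1$ is the first (ingress) switch and $s_n^m$ the last (egress) switch of the new path of $\Flow$. The invariant I want to establish is: for every $k\in\{1,\dots,m-1\}$, switch $s_n^k$ eventually receives a \MsgGoodToMove message for this update from its successor $succ_{new}(s_n^k)=s_n^{k+1}$. Since $P_{new}$ is finite, this invariant immediately yields the statement, with the egress $s_n^m$ being the switch that \emph{initiates} the message rather than receiving one.

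First I would settle the base case. Upon receiving the \MsgInstallUpdate message from the controller, the last switch $s_n^m$ is by definition the egress of $\Flow$ and already forwards $\Flow$'s traffic towards the destination; hence, following the protocol of \S\ref{sec:coordination}, it immediately issues a \MsgGoodToMove to its predecessor $prec_{new}(s_n^m)=s_n^{m-1}$. Assuming reliable delivery of inter-switch messages (retransmitted on timeout, as posited in \S\ref{sec:coordination}), $s_n^{m-1}$ eventually receives it, establishing the invariant for $k=m-1$.

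For the inductive step, suppose $s_n^{k+1}$ has sent, and $s_n^k$ has received, a \MsgGoodToMove with $k>1$. I then argue $s_n^k$ in turn sends a \MsgGoodToMove to $s_n^{k-1}$. On receipt, $s_n^k$ checks whether its outgoing link $\Link{n^k}{n^{k+1}}$ has enough residual capacity to carry $\Flow$; under the absence-of-congestion hypothesis this check eventually succeeds. The switch then installs the new forwarding rule, updates the residual capacity, and, since it is not the first switch of the new path, forwards a \MsgGoodToMove to $prec_{new}(s_n^k)=s_n^{k-1}$, which (again by reliable delivery) eventually receives it. Because $P_{new}$ has finitely many switches, the induction terminates at $s_n^1$, proving every switch on the new path receives (or, in the case of $s_n^m$, initiates) a \MsgGoodToMove.

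The main obstacle is the \emph{liveness} of the bandwidth check in the inductive step: a switch may defer its update while waiting for capacity to be freed, and I must rule out waiting forever. For \basicmigration in isolation, a single flow migrated from $P_{old}$ to $P_{new}$, this reduces to the observation that the old path of $\Flow$ still carries its traffic and is torn down only later (via \MsgRemoving), so no circular wait on $\Flow$'s own capacity can arise; and by the absence-of-congestion hypothesis the capacity required on each link of $P_{new}$ is available when needed. I would discharge this obligation by invoking that hypothesis directly, and then combine it with reliable message delivery and the finiteness of $P_{new}$ to close the argument.
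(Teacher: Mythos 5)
Your proposal is correct and follows essentially the same argument as the paper: the \MsgGoodToMove message originates at the egress switch and propagates backwards along the finite new path, with each switch guaranteed to forward it because the absence-of-congestion hypothesis ensures the bandwidth check eventually succeeds. The paper phrases this as a minimal-counterexample contradiction (the switch closest to $s_f$ that never receives the message) rather than your explicit reverse induction, but the two formulations are logically interchangeable; your version is merely more detailed about the base case and reliable delivery.
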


\begin{proof}
	Suppose, by contradiction, that the statement is not true, i.e., there exists a switch on the 
	new path that does not receive a \MsgGoodToMove message. Let $s_n^h$ be the closest switch to $s_f$ on the new path that does not receive a \MsgGoodToMove message. This means that 	$s_n^{h+1}$ received a \MsgGoodToMove message but it did not send it to $s_n^h$. This is a contradiction since we assumed that it is not possible to congest a link during the update. 
	Hence, the statement of the lemma is true.
\end{proof}

\begin{lemma}\label{lemm:basic-migration-removing-1}
	\basicmigration guarantees that every switch on the old path eventually receives a \MsgRemoving message.
\end{lemma}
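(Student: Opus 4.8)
The plan is to mirror the structure of the proof of Lemma~\ref{lemm:basic-migration-1}, arguing by contradiction along the old path, but now combining it with the conclusion of Lemma~\ref{lemm:basic-migration-1} in order to handle \MsgRemoving messages that are temporarily put on hold. First I would establish that the removing phase actually starts, i.e., that the first switch $s_o^1=s_i$ sends a \MsgRemoving message to its successor $succ_{old}(s_i)$ on the old path. This follows directly from Lemma~\ref{lemm:basic-migration-1}: since $s_i=s_n^1$ lies on the new path, it eventually receives a \MsgGoodToMove message; being the first switch of the new path it has no predecessor there, so upon this reception it switches to the new path and issues the \MsgRemoving message down the old path.

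Next I would set up the contradiction. Suppose some switch on the old path never receives a \MsgRemoving message, and let $s_o^h$ be the \emph{closest} such switch to $s_i$. By the previous step $s_i=s_o^1$ generates the message, so $h\ge 2$, and by minimality of the choice its predecessor $s_o^{h-1}$ on the old path does receive a \MsgRemoving message. The goal is then to show that $s_o^{h-1}$ eventually forwards it to $s_o^h=succ_{old}(s_o^{h-1})$, contradicting the choice of $s_o^h$. An important observation here is that, unlike the \MsgGoodToMove case, forwarding a \MsgRemoving message never requires spare capacity, since removing an old entry only frees capacity; hence the no-congestion assumption is not even needed at this step, and the only way $s_o^{h-1}$ can fail to forward immediately is the ``on hold'' rule, namely that $s_o^{h-1}$ is a switch common to both paths that has not yet installed its new forwarding entry.

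The main obstacle, and the crux of the argument, is precisely this on-hold case, because it is the one point where the removing phase can stall. Here I would invoke Lemma~\ref{lemm:basic-migration-1} a second time: if $s_o^{h-1}$ is a common switch then it also lies on the new path, so it eventually receives a \MsgGoodToMove message and installs its new entry; at that moment the condition triggering the hold is discharged, the held \MsgRemoving message is released, and it is forwarded to $s_o^h$. Thus, whether or not $s_o^{h-1}$ is a common switch, it eventually forwards the message to $s_o^h$, yielding the desired contradiction. I would also note that this reasoning is non-circular: the conclusion of Lemma~\ref{lemm:basic-migration-1} is obtained independently of the removing phase (installation of new entries depends only on available bandwidth), so there is no mutual dependency between the two phases that could produce a deadlock.

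Finally I would close by observing that the contradiction shows no such $s_o^h$ exists, i.e., the claim propagates inductively from $s_i$ toward $s_f$, so every switch on the old path eventually receives a \MsgRemoving message, as required.
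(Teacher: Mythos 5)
Your proof is correct and follows essentially the same route as the paper's: invoke Lemma~\ref{lemm:basic-migration-1} to show that $s_i$ eventually receives a \MsgGoodToMove message, switches to the new path, and emits the \MsgRemoving message, then derive a contradiction from the closest switch on the old path that never receives it. Your additional treatment of the on-hold case at common switches (via a second invocation of Lemma~\ref{lemm:basic-migration-1}) fills in a forwarding step that the paper's proof leaves implicit behind a bare ``which is a contradiction.''
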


\begin{proof}
	By Lemma~\ref{lemm:basic-migration-1}, $s_i$ is guaranteed to receive a \MsgGoodToMove message.
	When $s_1$ receives such messages, it switches its forwarding path to the new one, it removes the old path entry, and it sends a \MsgRemoving message
	to its successor of the old path. 
	Suppose, by contradiction, that the statement is not true, i.e., there exists a switch of the 
	old path that does not receive a \MsgRemoving message. Let $s_o^h$ be the closest switch to 
	$s_i$ on the old path that does not receive a \MsgRemoving message. 
	This means that 	$s_n^{h-1}$ received a \MsgRemoving message but it did not send it to $s_n^h$, which is a contradiction.
	Hence, the statement of the lemma is true.
\end{proof}

\begin{lemma}\label{lemm:basic-migration-2}
	 \basicmigration guarantees that any packet routed from a vertex that received a \MsgGoodToMove reaches $s_f$ in a finite number of steps.
\end{lemma}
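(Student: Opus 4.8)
The plan is to exploit the fact that the \MsgGoodToMove message travels \emph{backward} along the new path, so that its reception at a switch certifies that the whole downstream portion of the new path is already in place. Concretely, I would first establish the following invariant by backward induction along $P_{new}=(s_n^1\dots s_n^m)$: at any time $t$ after a switch $s_n^h$ has received and processed a \MsgGoodToMove message, every switch $s_n^h,s_n^{h+1},\dots,s_n^m$ has its \emph{new} forwarding rule for $F$ installed, i.e. $next(F,s_n^j,t)=s_n^{j+1}$ for all $h\le j<m$, with $s_n^m=s_f$ the egress. The base case is $s_f$ itself. For the inductive step I would invoke the \basicmigration mechanism, in which a switch installs its new rule \emph{before} relaying \MsgGoodToMove to its predecessor on the new path; hence by the time $s_n^h$ receives the message from $s_n^{h+1}$, all of $s_n^{h+1},\dots,s_n^m$ have already installed their new rules, and $s_n^h$ installs its own upon processing.

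Next I would note that these rules \emph{persist}: during \basicmigration a switch only adds a new-path entry (on \MsgGoodToMove) and later removes its old-path entry (on \MsgRemoving), and a switch that has already installed the new entry never retracts or overwrites it. Thus the invariant, once true at some instant, remains true for all later $t_j$, so the downstream suffix of $P_{new}$ stays correctly installed for the remainder of the update.

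With the invariant available the lemma follows by a short forward induction on path position. Let $v=s_n^h$ be a vertex that received \MsgGoodToMove and let a packet of $F$ reach $v$ at a time $t$. By the invariant $v$ forwards it to $s_n^{h+1}$, which—having received \MsgGoodToMove earlier, hence also having its new rule installed—forwards it to $s_n^{h+2}$, and so on. After at most $m-h$ hops the packet arrives at $s_n^m=s_f$. Since $P_{new}$ is a simple (finite, acyclic) path, $path\_forwarding(F,v,t)$ is finite and terminates at $s_f$; this simultaneously rules out a black-hole (no switch in the suffix lacks a rule) and a forwarding loop (the packet advances strictly along the acyclic new path).

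The step I expect to be the main obstacle is pinning down the invariant rigorously in the presence of \emph{common switches} shared by $P_{old}$ and $P_{new}$: for such a switch the new next-hop $succ_{new}(s)$ may differ from the old one $succ_{old}(s)$, so I must argue that once \MsgGoodToMove has been processed the forwarding decision at $s$ is governed solely by the new-path entry and not by a stale old entry, and that the timing of the backward propagation really does order the installations from $s_f$ up to $v$ (rather than leaving a downstream switch temporarily unconfigured). Once this ordering and the persistence property are established, the remaining forward-tracing argument is routine.
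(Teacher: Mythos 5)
Your proposal is correct and follows essentially the same argument as the paper: the paper proves, by induction over the update's time instants, the invariant that any switch having received \MsgGoodToMove has $next(F,s_n^h,\cdot)=s_n^{h+1}$ persistently installed, which is exactly your backward-induction-plus-persistence invariant packaged as a single time induction. Your forward trace to $s_f$ along the acyclic suffix of $P_{new}$, and your handling of common switches (the new entry, once installed, governs forwarding and is never retracted during the update), match the paper's reasoning.
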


\begin{proof}
		We prove by induction on $t_0,\dots,t_k$ that for each switch $s_n^h$ of the new path that received a \MsgGoodToMove message at time $t_j > t_0$ we have that $next(s_n^h,t_k)=s_n^{h+1}$ holds, with $t_k\ge t_j$. That is, $path\_forwarding(s,t_k)=(s_n^h\dots s_n^m)$.
		At time $t_0$, the statement trivially holds since none of the switches received a \MsgGoodToMove.
		At time $t_j>t_0$, if $s_n^h$ received a \MsgGoodToMove message at any time 
		before $t_j$, then by induction hypothesis $next(s_n^h,t_j)=s_n^{h+1}$ holds. 
		If  $s_n^h$ receives a \MsgGoodToMove at $t_j$, it installs the new forwarding rules $next(s_n^h,t_j)=s_n^{h+1}$.
		Hence, the statement of the lemma holds.
\end{proof}

\begin{lemma}\label{lemm:basic-migration-no-black-holes}
	\basicmigration is black-hole-free.
\end{lemma}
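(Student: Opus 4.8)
The plan is to show that no data packet of $F$ is ever forwarded to a switch that currently has no forwarding entry for $F$. I would first characterize the only two situations in which a switch $s \neq s_f$ can be in the \emph{empty} state $next(F,s,t_j)=\epsilon$ during \basicmigration: either $(a)$ $s$ lies on the new path $P_{new}$ but has not yet received its \MsgGoodToMove message (so it never installed the new rule), or $(b)$ $s$ lies on the old path $P_{old}$ and has already received its \MsgRemoving message (so it deleted the old rule). Any switch on neither path never holds an entry for $F$ and never receives its packets, so it is irrelevant. The target is then the invariant that a packet of $F$ never arrives at a switch while that switch is empty; black-hole freedom follows immediately, and I would organize the whole argument as an induction over the instants $t_0,\dots,t_k$.

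For case $(a)$ I would use the reverse-order propagation of \MsgGoodToMove together with Lemma~\ref{lemm:basic-migration-2}. A switch $s'$ forwards along its new next-hop $succ_{new}(s')$ only after installing the new rule, i.e.\ only after receiving \MsgGoodToMove; and since each switch forwards \MsgGoodToMove to its predecessor on $P_{new}$ \emph{only after} installing its own new rule, the downstream successor of $s'$ must already have installed its rule by the time $s'$ begins forwarding toward it. Hence no packet is ever sent into a still-empty new-path switch, ruling out a black hole at any type-$(a)$ switch.

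For case $(b)$ I would use the forward propagation of \MsgRemoving along $P_{old}$ together with the fact that $s_i$ emits \MsgRemoving only after it has itself switched to $P_{new}$, so that after that moment $s_i$ injects no further packets onto the old path. Combining this with the coordination rule that a common switch holds \MsgRemoving until it has installed the new rule, I would argue that by the time a switch $s$ deletes its old entry, its old-path predecessor $prec_{old}(s)$ has already stopped forwarding $F$'s packets toward $s$ (either it too has become empty, or it is a common switch now forwarding to a distinct $succ_{new}$). Thus, throughout the interval in which $s$ is empty, no upstream switch sends it packets, ruling out a black hole at any type-$(b)$ switch.

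The main obstacle I anticipate is the treatment of in-flight packets, namely the relative ordering of the \MsgRemoving wavefront and the data packets still traveling on the old path: I must rule out \MsgRemoving \emph{overtaking} a data packet and emptying a downstream switch before that packet arrives. I would close this gap by invoking FIFO ordering on each link together with the fact that a switch removes its old entry (or diverts to its new next-hop) strictly before forwarding \MsgRemoving onward, so that on every old-path link the last forwarded data packet precedes \MsgRemoving. Since $s_i$ ceases injecting old-path packets the instant it issues \MsgRemoving, every old-path packet remains ahead of the removal wavefront and therefore always finds an intact entry until it either reaches $s_f$ or diverts onto the fully installed new path, which completes the induction and establishes the claim.
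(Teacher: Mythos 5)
Your proof is correct and follows essentially the same route as the paper's: the paper's argument likewise rests on the observation that \MsgRemoving is emitted only after $s_i$ has switched to the new path, so every subsequently injected packet travels the new path (whose downstream rules are already installed by the \MsgGoodToMove propagation, \ie Lemma~\ref{lemm:basic-migration-2}), and the removal wavefront therefore never strands a packet. The only difference is one of rigor in your favor: you make explicit the per-link FIFO assumption needed to rule out \MsgRemoving overtaking in-flight old-path packets, a race that the paper's much terser proof leaves implicit.
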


\begin{proof}
	A black-hole is created if a switch removes the old path while there are some packets that
	still needs to be routed through that old path. Since the \MsgRemoving is sent once $s_i$ 
	switches to the new path, any packet sent after that message is create is forwarded on the new path. Hence, this guarantees that a packet is never dropped.
\end{proof}

\vspace{2mm}
\rephrase{Theorem}{\ref{theo:basic-migration}}{
	\basicMigrationCorrectnessStatement
}
\vspace{2mm}

\begin{proof}
	We first show that \basicmigration prevents forwarding loops.
	Consider a packet that is routed along the old path. Two cases are possible: it either reaches $s_f$ along it or it reaches a switch that installed the new path forwarding entry. In the latter case, by Lemma~\ref{lemm:basic-migration-2},
	it is guaranteed to reach $s_f$. 
	
	By Lemma~\ref{lemm:basic-migration-no-black-holes}, we have that \basicmigration guarantees 
	that black-holes are preventing.
	
	We now prove that the update operation terminates, we first need to show that $s_i$ eventually switches to the new path. 
	By Lemma~\ref{lemm:basic-migration-1}, $s_i$ is guaranteed to receive the \MsgGoodToMove message
	and therefore it will switch to the new path. We then need to prove that the old path is eventually
	removed from the network. By Lemma~\ref{lemm:basic-migration-removing-1}, each switch of 	the old path is guaranteed to receive the \MsgRemoving message and, in turn, to remove the old path	entry.
	
	Hence, the statement of the theorem is true.
\end{proof}

\subsection{Segmented-Migration Correctness}

\spacebeforeparagraph
\smartparagraph{Heuristic reminder. }
\segmentationHeuristicDescritption

 For each pair $(r,s) \in \NwPairSwitchSet_R$, let $\NwSwitchSet_{(r,s)}$ be the set
 of switches contained in the subpath from $r$ to $s$ of $P_{old}$, endpoints included.
We order the pairs in $\NwPairSwitchSet_R$ as follows. We have that $(r,s) < (r',s')$
 if $r$ is closer to $s_i$ than $r'$ on the old path.
 Recall that the pairs in $\NwPairSwitchSet_R$ are chosen in such a way that the 
 common switches that connects each pair of switches do not overlap each other.

\begin{lemma}\label{lemm:segmented-migration-no-loops-between-pairs}
 For any possible execution of \segmentedmigration, for any time $t_j>t_0$, for
 any pair $(r,s) \in \NwPairSwitchSet_R$, for any switch $u \in \NwSwitchSet_{(r,s)}$,
 $path\_forwarding(r,t_j)$ does not traverse any switch in $(prec_{old}^*(r) \cup prec_{old}^*(r)) \setminus \NwSwitchSet_{(r,s)}$.
\end{lemma}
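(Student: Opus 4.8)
The plan is to establish the statement as an invariant that I prove by induction on the ordered instants $t_0<t_1<\dots<t_k$ at which the individual update operations fire (recall that, without loss of generality, a single operation is performed per instant). Fix a pair $(r,s)\in\NwPairSwitchSet_R$; the quantified switch $u\in\NwSwitchSet_{(r,s)}$ only serves to single out this region of the old path, and the real content is that the trajectory $path\_forwarding(r,t_j)$ never enters a switch of $(prec_{old}^*(r)\cup prec_{new}^*(r))\setminus\NwSwitchSet_{(r,s)}$, i.e.\ it never leaks backward past the front $r$ of the region. Intuitively, a packet launched at $r$ may travel forward along the old path, or hop onto one or more already-completed new segments, but under any admissible schedule it is never routed to a switch preceding $r$ outside $\NwSwitchSet_{(r,s)}$.

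For the base case $t_0$ no operation has yet fired, so every switch forwards along the old path and $path\_forwarding(r,t_0)$ is the monotone old-path suffix $(r\dots s_f)$, which is disjoint from $prec_{old}^*(r)$ and touches no new-path-only switch, giving the claim. For the inductive step I would assume the invariant at $t_{j-1}$ and inspect the operation at $t_j$; only the switch that installs its new next-hop at $t_j$ can alter the trajectory, and by Lemma~\ref{lemm:basic-migration-2} the instant a switch points to its new segment every downstream switch of that segment already forwards along it up to the segment's terminal common switch. Hence after $t_j$ the trajectory from $r$ is a concatenation of old-path hops and fully-completed new segments. Since each \segmentedmigration segment runs between two consecutive switches of $\NwSwitchSet_C^{*}\cup\NwSwitchSet_R^{1}\cup\NwSwitchSet_R^{2}$, its interior switches are non-common and therefore cannot be predecessors of $r$; so appending a completed segment can only extend the trajectory inside the region or strictly forward past it.

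The crux, and where I expect the real work, is the reversed switch $s$: because $(r,s)$ is reversed, the new next-hops inside the {\sc InLoop} segment rooted at $s$ point back toward (and possibly past) $r$, so if that segment could activate while $r$ still forwarded along the old path we would obtain the loop $s\rightsquigarrow r\to succ_{old}(r)\rightsquigarrow s$ and, continuing, reach $prec_{old}^*(r)$. The heuristic rules this out by construction: $s\in\NwSwitchSet_R^{2}$ generates an {\sc InLoop} segment whose dependency $dep(\cdot)$ is the {\sc NotInLoop} segment starting at $r$, so this segment activates only after $r$ has switched to its own new segment, which heads toward $s_f$ and away from the front of the region. I would therefore split on whether the {\sc NotInLoop} segment at $r$ has completed: if not, the dependency keeps the backward-pointing next-hops inactive and the trajectory from $r$ stays on the old path, out of $prec_{old}^*(r)$; if so, $r$ bypasses $s$ and the backward detour entirely. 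I would close the induction by invoking property~$(i)$ of $\NwPairSwitchSet_R$ and the ordering $(r,s)<(r',s')$ to argue that the segments of every other reversed pair occupy disjoint stretches of the old path and so cannot splice a forbidden switch into a trajectory confined to $\NwSwitchSet_{(r,s)}$.
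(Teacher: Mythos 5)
Your inductive skeleton (trajectories are concatenations of old-path hops and fully-completed new segments, via Lemma~\ref{lemm:basic-migration-2}) is sound, but the step that actually carries the lemma is asserted with a wrong justification and never closed. You claim that a spliced-in new segment ``can only extend the trajectory inside the region or strictly forward past it'' because ``its interior switches are non-common and therefore cannot be predecessors of $r$.'' That inference is false: the forbidden set is $(prec_{old}^*(r)\cup prec_{new}^*(r))\setminus\NwSwitchSet_{(r,s)}$, and a non-common, new-path-only switch can perfectly well precede $r$ \emph{on the new path} (in Figure~\ref{fig:one-cycle}, $s_8$ is a non-common interior switch of a segment and precedes $s_1$ on the new path). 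Likewise, the anchor switch at which a segment's new path terminates could a priori lie before $r$ on the old path, outside the region. Your only tool against this is the dependency mapping $dep(\cdot)$, but that mechanism controls \emph{when} the {\sc InLoop} segment at $s$ activates, not \emph{where} its new path leads; and since the lemma must hold at every $t_j$ -- including after all segments have completed, when every dependency is satisfied -- a timing argument cannot, even in principle, provide the confinement. Your own case split exposes this: once the {\sc NotInLoop} segment at $r$ completes and the {\sc InLoop} segment at $s$ activates, a packet at an interior switch $u\in\NwSwitchSet_{(r,s)}$ (not at $r$) still reaches $s$ and follows the backward-pointing new path, and nothing in your argument bounds how far back that path goes -- the ``possibly past $r$'' case you flag is exactly the one left open.

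What closes this gap in the paper is a purely structural contradiction that you never invoke: if some trajectory from the region reached a switch $w$ preceding $r$, then (since old-path hops only move away from $s_i$) it would have to contain a new-path subpath from some $x\in\NwSwitchSet_{(r,s)}$ to $w$; hence $w$ and $x$ appear in reversed order in the two paths, so by coverage property $(ii)$ the switch $w$ lies in some region $\NwSwitchSet_{(r',s')}$, and the two pairs $(r,s)$ and $(r',s')$ could then be replaced by the single pair $(r',s)$ spanning both regions -- contradicting the \emph{minimality} property $(iii)$ of $\NwPairSwitchSet_R$. You instead appeal to property $(i)$ and the ordering of pairs, which only say that endpoints of distinct pairs are not nested; they do not forbid a new-path subpath from crossing out of one region into an earlier one. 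So the missing idea is precisely the minimality argument: no dependency bookkeeping or induction on firing times substitutes for it, and with it the lemma follows for all times at once, which is why the paper's proof needs neither $dep(\cdot)$ nor induction.
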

  
\begin{proof}
	Suppose, by contradiction, that the statement of the lemma is not true, \ie, there exists
	an execution of \segmentedmigration such that at a certain time instant $t_j>t_o$, there
	exists a pair $(r,s) \in \NwPairSwitchSet_R$ such that for a switch $u \in \NwSwitchSet_{(r,s)}$ we have that
	$path\_forwarding(r,t_j)$ traverses a switch $w$ in $(prec_{old}^*(r) \cup prec_{old}^*(r))  \setminus \NwSwitchSet_{(r,s)}$.
	Let $w$ be the closer vertex to $u$ on $path\_forwarding(r,t_j)$.
	Let $(v_1\dots,v_n)$ be the subpath of $path\_forwarding(u,t_j)$ from $u$ to $w$. 
	Since $w$ is a closer to $s_i$ than $u$ and routing along the old path cannot get
	closer to $s_i$, $path\_forwarding(r,t_j)$ must contain a subpath $P^*$ that connects a switch $x$ in
	$\NwSwitchSet_{(r,s)}$ to $w$. This means that $x$ and $w$ appear in reverse order in
	the new and old path. Let $(r',s') $ be a pair such that $w \in \NwSwitchSet_{(r',s')}$. 
	We can then create a new pair $(r',s)$ that spans at least both $\NwSwitchSet_{(r,s)}$
	and $\NwSwitchSet_{(r',s')}$, which is a contradiction since, by Property (iii) of the set 
	of pairs $\NwPairSwitchSet_R$, the cardinality of $\NwPairSwitchSet_R$ is minimal.
	Hence, the statment of the lemma is true.
\end{proof} 
  
Lemma~\ref{lemm:segmented-migration-no-loops-between-pairs} guarantees that if
a forwarding loop exists, it must be confined within the two segments created by
the switches of a single pair in $\NwPairSwitchSet_R$. We now prove that the dependencies
among these segments prevent any forwarding loop.
 
 \begin{lemma}\label{lemm:segmented-migration-forwarding-loop-freedom}
For any possible execution of \segmentedmigration, for any time $t_j>t_0$, for
any pair $(r,s) \in \NwPairSwitchSet_R$, for any switch $u \in \NwSwitchSet_{(r,s)}$,
$path\_forwarding(r,t_j)$ is finite.
 \end{lemma}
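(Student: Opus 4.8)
The plan is to combine Lemma~\ref{lemm:segmented-migration-no-loops-between-pairs}, which localizes any potential anomaly inside a single pair, with the scheduling dependency $dep$, which I will show makes the dangerous forwarding configuration unreachable. By Lemma~\ref{lemm:segmented-migration-no-loops-between-pairs}, for any $u \in \NwSwitchSet_{(r,s)}$ the sequence $path\_forwarding(u,t_j)$ never leaves the switches associated with the pair $(r,s)$, namely the common switches between $r$ and $s$ on the old path together with the internal switches of the two segments $S_r$ (the {\sc NotInLoop} segment starting at $r$) and $S_s$ (the {\sc InLoop} segment starting at $s$, with $dep(S_s)=S_r$). It therefore suffices to prove that the forwarding graph induced by these switches is acyclic at every time instant, which immediately yields finiteness of $path\_forwarding(u,t_j)$.

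First I would establish the key invariant: \emph{at every time $t_j$, if $s$ has already switched to its new next-hop (\ie, $S_s$ has been updated), then $r$ has also switched to its new next-hop (\ie, $S_r$ has been updated).} This is where the dependency mechanism enters. By the coordination protocol of Section~\ref{sec:coordination}, the first switch of an {\sc InLoop} segment may commit to the new path only after a \MsgGoodToMove has travelled back through $S_s$, and because $S_s$ is {\sc InLoop} a \MsgGoodToMove is emitted on $dep(S_s)=S_r$; hence $S_s$ cannot complete before $S_r$ has been installed. I would prove the invariant by induction on $t_0,\dots,t_k$ exactly as in Lemma~\ref{lemm:basic-migration-2}: at $t_0$ neither segment is updated so it holds vacuously, and the only event that could violate it is $s$ switching while $r$ has not, which the $dep$ constraint forbids.

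Given the invariant, I would finish with a short case analysis on the update state of the pair at time $t_j$. Recall that on the old path $r$ precedes $s$ while on the new path $s$ precedes $r$, so the only way to close a cycle is to have $s$ forward ``backwards'' toward $r$ along the new path while $r$ still forwards ``forwards'' toward $s$ along the old path; every intermediate switch strictly between $r$ and $s$ carries only a {\sc NotInLoop} segment (by property $(i)$ of $\NwPairSwitchSet_R$, no other pair element lies in $\NwSwitchSet_{(r,s)}$) and so never redirects traffic away from the egress. If $S_s$ is not yet updated, $s$ follows the old path and the induced subgraph is the acyclic old subpath toward $s_f$; if $S_s$ is updated then, by the invariant, $S_r$ is updated too, so a packet reaching $r$ continues along the new path toward the egress rather than back toward $s$, and again no cycle is possible. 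In both cases the induced forwarding graph on $\NwSwitchSet_{(r,s)}$ is acyclic, so $path\_forwarding(u,t_j)$ is finite.

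The hard part, I expect, will be turning the informal dependency description into the rigorous inductive invariant: one must argue precisely over the asynchronous \MsgGoodToMove exchange that the first switch $s$ of the {\sc InLoop} segment truly cannot commit its new next-hop until the \MsgGoodToMove originating from the completion of $S_r=dep(S_s)$ has been received, and simultaneously rule out transient states arising during the backward installation of $S_s$ in which an internal segment switch is updated but $s$ is not. Handling these partial-installation states, and confirming that they too keep the induced subgraph acyclic (which follows because the internal switches of $S_s$ lie only on the new path and point toward $r$, not back toward any predecessor of $r$), is the most delicate bookkeeping in the argument.
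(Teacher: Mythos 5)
Your proposal follows essentially the same route as the paper's proof: you invoke Lemma~\ref{lemm:segmented-migration-no-loops-between-pairs} to confine any potential loop to the segments of a single pair $(r,s)$, and then use the dependency $dep(S_s)=S_r$ enforced by the coordination protocol to establish that $s$ commits to its new path only after $r$ has already switched, which is exactly the ordering the paper relies on. The only cosmetic difference is that the paper compresses your concluding case analysis into an appeal to Theorem~\ref{theo:basic-migration} (per-segment loop-freedom of \basicmigration), whereas you spell out the acyclicity of the induced forwarding graph explicitly.
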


\begin{proof}
	By Lemma~\ref{lemm:segmented-migration-no-loops-between-pairs},
	a forwarding loop can only traverse switches that belong to the old
	path of the {\sc NotInLoop} segment $S_r$ of a switch $r$ and the old
	path of the {\sc InLoop} segment $S_s$ of a switch $s$ for a pair $(r,s) \in
	\NwPairSwitchSet_R$. However, $s$ only switches to its new path
	when it receives the \MsgRemoving message of $S_r$ because
	$dep(S_s)=S_r$. Since Theorem~\ref{theo:basic-migration} guarantees that 
	\basicmigration is forwarding-loop-free, a forwarding loop cannot happen and the
	statement of the theorem is true.
\end{proof}
 
\begin{lemma}\label{lemm:segmented-migration-black-hole-freedom}
\segmentedmigration is black-hole free.
\end{lemma}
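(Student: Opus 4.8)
The plan is to derive black-hole freedom of \segmentedmigration directly from that of \basicmigration (Lemma~\ref{lemm:basic-migration-no-black-holes}), exploiting the fact that every segment is delimited by common switches. Recall that a packet of $F$ is dropped at a switch $s$ only when $next(F,s,t_j)=\epsilon$. I would first record two structural facts about the segments produced by the heuristic. First, in any segment $S{=}(P'_o,P'_n)$ both subpaths $P'_o$ and $P'_n$ start and end at a switch of $\NwSwitchSet_C$: the heuristic cuts the old and the new path exactly at the common switches, so the endpoints of $P'_o$ and $P'_n$ lie in $\NwSwitchSet_C$ (for an {\sc InLoop} segment the old and new endpoints may differ, but each is still common). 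Second, every non-common switch lies on exactly one of the two paths and is therefore interior to exactly one segment, so installation and removal of its flow entry is governed solely by the \basicmigration of that single segment, with no interference from the others.

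Next I would show that a common switch is never left without an entry for $F$. Such a switch lies on both $P_{old}$ and $P_{new}$, and its forwarding for $F$ is controlled only in its role as the first switch of the segment starting at it. By the \basicmigration protocol a first switch switches atomically from its old to its new next-hop upon receiving \MsgGoodToMove --- it installs the new entry as it drops the old one --- so it always holds some entry. Hence no packet is ever dropped at a common switch, independently of how the segments happen to be scheduled.

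The argument then closes by a case analysis on the switch at which a packet currently sits. At a common switch it is forwarded by the previous paragraph. At an interior switch of a segment $S$, Lemma~\ref{lemm:basic-migration-no-black-holes} applied to $S$ guarantees it is not dropped: $S$ emits its \MsgRemoving only after $S$'s first switch has switched to $P'_n$, so every packet inside $S$ rides either the still-intact old subpath or the fully-installed new subpath and reaches one of the two common endpoints of $S$ unharmed. Since these two cases exhaust all switches and neither drops a packet, no black-hole can arise. I would close by observing that the {\sc InLoop} dependencies are immaterial here: while a segment $S$ waits for $dep(S)$, its first switch keeps forwarding on its old subpath, which remains intact, so waiting merely postpones a migration and can never create a missing entry (the ordering is what matters for loop freedom in Lemma~\ref{lemm:segmented-migration-forwarding-loop-freedom}, not here).

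I expect the main obstacle to be pinning down the two structural facts precisely --- in particular that the heuristic's cuts land exactly on common switches, so that every segment endpoint is common and the interior switches are partitioned cleanly among the segments --- since once these are in place the claim reduces immediately to Lemma~\ref{lemm:basic-migration-no-black-holes} together with the ``common switches are always ruled'' observation. The possibly-distinct old and new endpoints of an {\sc InLoop} segment demand a little care, but they are harmless for black-hole freedom, which only requires each packet to be forwarded somewhere rather than to follow any prescribed route.
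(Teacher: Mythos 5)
Your proposal is correct and follows essentially the same route as the paper's own proof: both reduce the claim to Lemma~\ref{lemm:basic-migration-no-black-holes} applied independently per segment, and then handle the segment boundary switches by observing that, being common to the old and new paths, they always hold either the old or the new forwarding entry. Your version merely spells out in more detail what the paper states tersely (the atomic switch-over at a segment's first switch, the clean partition of non-common switches among segments, and the irrelevance of the {\sc InLoop} dependencies to black-hole freedom).
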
	

\begin{proof}
 Since \segmentedmigration is an execution of independent \basicmigration, by Lemma~\ref{lemm:basic-migration-no-black-holes}, we have that a packet is never dropped during an update. We only need to check that the last
 switches of each segment do not drop a packet. By construction of the segments, each of
 these switches is a common switch between the old and the new path, thus they always
 have either the old or new forwarding entry installed, which proves the statement of the theorem.
\end{proof}

\vspace{2mm}
\rephrase{Theorem}{\ref{theo:segmentation-anomaly-freeness}}{
	\segmentedMigrationCorrectnessStatement
}
\vspace{2mm}

\begin{proof}
	It easily follows from Lemma~\ref{lemm:segmented-migration-forwarding-loop-freedom} 
	and~\ref{lemm:segmented-migration-black-hole-freedom}.
\end{proof}

\subsection{Congestion-freedom Correctness}

\begin{lemma}\label{lemm:ez-schedule-correctness}
\schedulingheuristic is correct for the network update problem.
\end{lemma}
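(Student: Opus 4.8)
The plan is to establish the safety statement that underlies ``correctness'' in the sense glossed just before Theorem~\ref{theo:deadlock-free}: every intermediate configuration through which \schedulingheuristic drives the network is congestion-free, so the heuristic never executes an operation that overloads a link. Mirroring the \basicmigration and \segmentedmigration proofs, I would argue by induction over the sequence $t_0,\dots,t_k$ of instants at which the executed operations $u_0,\dots,u_k$ are performed. The invariant to maintain is: for every link $\Link{i}{j}$ and every $t_j$, the total traffic volume currently routed through $\Link{i}{j}$ is at most its capacity, equivalently the residual capacity annotating $\Link{i}{j}$ in the dependency graph \DependenceGraph{\MovePathSet}{\LinkSetDependent}{E_{free}}{E_{req}} stays nonnegative.

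For the base case I would invoke the problem formulation (\S\ref{sec:problem}): both $\NetworkConfig$ and $\NetworkConfig'$ are assumed congestion-free, and at $t_0$ the network is in $\NetworkConfig$, so the invariant holds. For the inductive step I would use the fact that the coordination protocol of \S\ref{sec:coordination} decomposes every operation into per-switch, per-link reservations: a switch increases the load on a single outgoing link only when it processes a \MsgGoodToMove message, and only after checking that there is enough bandwidth on that outgoing link. Hence the only events that raise the load of $\Link{i}{j}$ are reservations guarded by exactly this residual-capacity test on $\Link{i}{j}$, whereas the events that lower load (removals triggered by \MsgRemoving) can never create congestion. Since the guard forbids a reservation whose volume exceeds the tracked residual capacity, that residual capacity remains nonnegative after $u_{j+1}$, closing the induction.

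Two cases need separate care. First, the splitting mechanism: when a splittable deadlock is detected, the volume moved onto the new segment is taken as the minimum of the outgoing link's available capacity and the free capacity required in the dependency cycle, so the reserved amount is by construction bounded by the residual capacity, and the same guard argument applies. Second, the priority logic (the low/medium/high levels and the test that the residual capacity still suffices for all higher-priority operations) only ever makes a switch \emph{more} conservative: it can postpone an otherwise-feasible operation, but it never authorizes a reservation that the basic capacity check would reject. Consequently priorities cannot introduce congestion and do not affect this safety argument.

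The main obstacle I anticipate is the bookkeeping during the transient window in which a flow or segment is simultaneously present on its old and new paths---both while \MsgGoodToMove propagates hop-by-hop before the old path is torn down, and while a flow is split across two paths. I would need to show that the residual capacity each switch tracks is always a valid lower bound on the true free capacity of its outgoing link: because an old-path entry is removed only after the whole new path is installed (the \MsgRemoving phase of \segmentedmigration), and because split fractions are charged to both paths, the reserved-capacity counters never understate the load actually carried. Establishing this accurate-accounting invariant---essentially that no link's counted load ever lags behind its real load---is the delicate part; once it is in place, the per-reservation guard immediately yields congestion-freedom of every intermediate configuration, and the lemma follows.
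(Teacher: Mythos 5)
Your proposal is correct and rests on exactly the same observation as the paper's own (very terse) proof: a switch executes an update operation only after checking that the residual capacity of its outgoing link covers the volume to be moved, so no executed operation can ever overload a link. The paper states this in a single sentence, and your induction over execution instants, the handling of splitting volumes and of the priority levels, and the accounting invariant you flag at the end are all elaborations of that same capacity-guard argument rather than a different route.
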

	
\begin{proof}
	Since each switch moves to the new path only if its outgoing capacity is 
	at least the volume of the flow that has to be routed through the new path,
	it immediately follow that the \schedulingheuristic heursistic is correct.
\end{proof}

\eat{
\begin{lemma}\label{lemm:ez-schedule-completeness}
	\schedulingheuristic is complete for the network update problem if 
	each flow is decomposed in only {\sc NotInLoop} segments.
\end{lemma}

\begin{proof}
 Suppose, by contradiction, that there exists an execution of the update operations
 that is congestion-free but \schedulingheuristic reaches an unsplittable
 deadlock. By Lemma~\ref{}, there exists a cycle  in the dependency graph with 
 zero residual link capacity. Among all the cycles with zero residual capacity,
 let $C$ be the last cycle that lost all its residual link capacity.
 Let $u_j$ be the update operation that left $C$ with zero residual capacity.
 By definition of our heuristic, $u_j$ 
\end{proof}

\subsection{Endpoint policy Correctness}
}

\eat{
--- OLD STUFF FROM HERE --- 

In the whole of the proof, let $t_0$ be the time when controller send \MsgInstallUpdate message to all switches.
The function $\nextSW{\NwSwitch_i}{t}{\Tag}$
returns the next hop of $\NwSwitch_i$ for packet $\Packet$ according to the tag $\Tag$ of the packet.
At time $t_0$, $\forall \Flow, \forall \MovePath \in \MovePathSet_\Flow, \forall s_{i} = \MovePath.\SwInit, \nextSW{\NwSwitch_{i}}{t_0}{\Tag_\Flow}$~returns to the successor in $\MovePath.\Before$.

\begin{lemma}[Congestion freedom]
	\label{lem:congestion-freedom}
	Congestion freedom holds for {\em ez-Segway}
\end{lemma}
\begin{proof}
	Before calling function {\sc Execute} for a pair of update operation $\MovePath$ and link $\Link{i}{j}$, the available capacity of $\Link{i}{j}$ is always checked (line $14$ of Algorithm~\ref{alg:Event}, line $8$ and $15$ of Algorithm~\ref{alg:Scheduling}). $\MovePath$ is only executed when the remaining capacity of $\Link{i}{j}$ is greater than the required volume of $\MovePath$.
	
	Hence, the movement of $\MovePath$ into the new link does not make the carrying volume of $\Link{i}{j}$ exceeds its capacity or {\em congestion-freedom} holds.
\end{proof}

\begin{lemma}[Black-hole freedom]
	\label{lem:black-hole-freedom}
	Sending message \MsgGoodToMove ensures that black-hole freedom property 
	holds for {\em ez-Segway}.
\end{lemma}
\begin{proof}
	Consider an update operation $\MovePath \in \MovePathSet_\Flow$ having new segment $\MovePath.new=[s_0, s_1, \cdots, s_k]$.
	Let $t_j$ be the time when $\NwSwitch_j, j < k$ set function $\nextSW{\NwSwitch_j}{t_j}{\Tag_\Flow} = \NwSwitch_{j+1}$, because of the condition at line $28, 30$ of Algorithm~\ref{alg:Scheduling}, $\NwSwitch_j$ can only change $\nextSW{\NwSwitch_j}{t_j}{\Tag_\Flow}$ to $\NwSwitch_{j+1}$ when either it received \MsgGoodToMove from $\NwSwitch_{j+1}$ or $\NwSwitch_{j+1}$ is end switch of the segment.
	
	Moreover, $\NwSwitch_{j+1}$ only sends \MsgGoodToMove at line $21, 25$ of Algorithm~\ref{alg:Scheduling} after setting $\nextSW{\NwSwitch_j}{t_{j+1}}{\Tag_\Flow} = \NwSwitch_{j+2}$ at some time $t_{j+1} < t_j$. Or $t_{j+1}$ has new forwarding rules.

	Recursively, every intermediate switch, in a segment, receiving $\MsgGoodToMove$ and sending it to predecessor has the forwarding rules of flow $\Flow$ or $\nextSW{\NwSwitch_{\Flow_{j+1}}}{t}{\Flow} \not \equiv \Null$.

	Therefore, {\em Black-hole freedom} holds.
\end{proof}

\begin{lemma}[Loop-freedom]
	\label{lem:loop-freedom}
	Scheduling segment \texttt{NewSegmentInLoop} after \texttt{PreInLoopSegment} 
	and \texttt{OldSegmentInLoop} segments ensure that loop-freedom property 
	holds for {\em ez-Segway}.
\end{lemma}
\begin{proof}
	For every non-overlapped longest pair of reversed order switches $\NwSwitch_i, \NwSwitch_j$ (by the order in which they appear in the old path):
	\begin{itemize}
		\item The predecessor segment in the old path is marked as \texttt{PreInLoopSegment}, this segment ends with the $\NwSwitch_i$ in the old path and $\NwSwitch_j$ in the new path.
		\item The \texttt{OldSegmentInLoop} starts with $\NwSwitch_i$ and ends with $\NwSwitch_j$ in the old path and another switch $\NwSwitch_k$ in the new path.
		\item And the \texttt{NewSegmentInLoop} starts with $\NwSwitch_j$ and ends withs $\NwSwitch_{k'}$ in the old path and $\NwSwitch_i$ in the new path.
	\end{itemize}
	
	By finding all `{\em non-overlapped pairs of reversed pivot}' switches, the segmentation ensures that segment $\NwSwitch_i\cdots \NwSwitch_j$ in the old path and $\NwSwitch_j\cdots \NwSwitch_i$ in the new path is not overlapped or contained by other pairs.
	
	Moreover, a cycle (or loop) only occurs with the combination of old and new path in the order: (1) the old path of \texttt{OldSegmentInLoop} segment and (2) the new path of \texttt{NewSegmentInLoop}. This cycle is prevented by only changing the \texttt{NewSegmentInLoop} segment to new successor switches after finishing moving \texttt{PreInLoopSegment} and \texttt{OldSegmentInLoop} segment. In particular, when the initial switch of \texttt{OldSegmentInLoop} segment receiving the {\MsgRemoving} from the predecessor switch the old path of \texttt{OldSegmentInLoop} segment.
\end{proof}

\begin{lemma}[Endpoint-policy coherence]
	\label{lem:endpoint-policy-coherence}
	\MsgCoherent message ensures that {\em endpoint-policy coherence} holds for 
	{\em ez-Segway}.
\end{lemma}
\begin{proof}
	\MsgCoherent message is initially sent by the end switch of the whole flow. An intermediate switch only send \MsgCoherent message when it receives \MsgCoherent message from successors in either the old or new path.
	
	Moreover, only the initial switch of flow can decide the $\Tag$ of every packet. But, the initial switch only changes to the new $\Tag$ when it receives \MsgCoherent message from all possible successors.
	
	Hence, \MsgCoherent message guarantees that every packet is forwarded under unique `{\em endpoint-policy}'.

	Therefore, {\em endpoint-policy coherence} property holds.
\end{proof}
\begin{lemma}[Deadlock freedom]
	\label{lem:deadlock-freedom}
	By ensuring that every link always has available capacity to execute all update operations that are a part of dependency cycles, we can ensure that deadlock never appear during the update.
\end{lemma}
\begin{proof}
	Consider arbitrary link $\Link{i}{j}$ having update operations that are a part of dependency cycles.
	By the condition at line $15$ in Algorithm~\ref{alg:Scheduling}, all update operations that are not a part of dependency cycles are only executed when $\Link{i}{j}$ has enough remaining capacity to execute all update operations that are a part of dependency cycle.
	
	For every update operation $\MovePath$ that is a part of a dependency cycle, there exists an update operation $\MovePath'\in \MovePathSet$ and an edge $e'_{free} \in E_{free}$ from update operation $\MovePath' \rightarrow \Link{i}{j}$ such that $e'_{free}$ is in the same cycle with $\MovePath$ and represents the released capacity to $\Link{i}{j}$ from $\MovePath'$.
	
	By giving higher priority to the update operations in dependency cycles that return 
	more capacity, we ensure that $\Link{i}{j}$ always has available capacity for 
	pending update operation. In other words, deadlock-freedom holds for {\em 
	ez-Segway}.
\end{proof}
\begin{theorem}
	\label{thm:valid-scheduler}
	{\em ez-Segway} is a network update scheduling mechanism that satisfies black-hole freedom, loop freedom, congestion freedom, endpoint policy coherence as well as ensures that no deadlock occurs during the update.
\end{theorem}
\begin{proof}
	The Theorem is simply proved by the five lemmas proven above.
\end{proof}
}\section{Implementation in P4}
\label{sec:p4-implementation}
\lstset{xleftmargin=10pt}
\lstset{belowskip=0pt}
\lstset{aboveskip=2pt}
\lstset{basicstyle=\scriptsize}

In this section, we illustrate in detail our implementation of \ezsegway in P4. 
We remind the reader that a P4 program consists of a ``control'' part, a set of 
``tables'', and a set of ``actions''. The control part is responsible for executing 
the logic of the packet processing opertations. A table consists of a set of 
entries that perform an action if the processed packet header matches some 
conditions. Finally, an action consists of a set of read\slash write operations 
performed on the packet and on the internal state of the switch.

Figure~\ref{fig.p4-header-type} shows the definition in P4 of all the header types used in the messages exchanged in \ezsegway, \ie, $\MsgInstallUpdate, \MsgGoodToMove$, and $\MsgRemoving$. Namely, since P4 allows the programmer to define a single header type, we define \texttt{inst\_t} so as to contain the union of all the needed fields.  Observe that we define a \texttt{msg\_type} field so that a switch can distinguish among the different types of messages.
For convenience, we also define two additional \texttt{header\_type}, called \texttt{cap\_t} and \texttt{flow\_id\_t}, that are simply used as structured variable types and not as packet headers. These two variables are instantiated with the \texttt{metadata} instruction as the \texttt{cap\_meta} and \texttt{id\_meta}, respectively. We remind the reader that \texttt{metadata} state is destroyed as soon as the processed packet leaves the switch. To maintain state within a switch, we use registers, which are arrays of
bits. We use multiple registers -- one for each state variable -- because P4 does not allow to create a register of composite data types.
 Each \texttt{register} instruction is instantiated FLOW\_COUNT or LINK\_COUNT times, where those constants denote the maximum number of flows and links at a switch, respectively.  
We remind the reader that only registers can be accessed from the ``action'' part of the P4 program, which we describe later in this section. 

\begin{figure}[tp]
\begin{lstlisting}[frame=single,breaklines=true,breakatwhitespace=true,prebreak=\mbox{\space\tiny$\hookleftarrow$},framexleftmargin=4pt,language=C,morekeywords={fields, if, else, header_type, header, control, apply, metadata},escapechar=@]

header_type inst_t{
  fields{
    msg_type: 2; /* 1: @$\MsgInstallUpdate$@,
                    2: @$\MsgGoodToMove$@,
                    3: @$\MsgRemoving$@ */
    id: 16;
    vol: 4;
    old_lnk: 8;
    new_lnk: 8;
    pre_lnk: 8;
    is_end_sw: 1;
    version: 8;
  }
}

header inst_t inst;
metadata inst_t inst_meta;

header_type cap_t{
  fields{
    old_cap: 16;
    new_cap: 16;
  }
}
metadata cap_t cap_meta;

header_type flow_id_t{
  fields{
    id: 8;
    gtm_recv: 1;
    done: 1;
  }
}
metadata flow_id_t id_meta;

register vols_register{
  width: 16; instance_count : FLOW_COUNT;
}
register out_lnks_register{
  width: 16; instance_count : FLOW_COUNT;
}
register pre_lnks_register{
  width: 16; instance_count : FLOW_COUNT;
}
register old_lnks_register{
  width: 16; instance_count : FLOW_COUNT;
}
register new_lnks_register{
  width: 16; instance_count : FLOW_COUNT;
}
register old_pre_register{
  width: 16; instance_count : FLOW_COUNT;
}
register is_end_sws_register{
  width: 1; instance_count : FLOW_COUNT;
}
register is_init_sws_register{
  width: 1; instance_count : FLOW_COUNT;
}
register versions_register{
  width: 8; instance_count : FLOW_COUNT;
}
register gtm_recvs_register{
  width: 1; instance_count : FLOW_COUNT;
}
register dones_register{
  width: 1; instance_count : FLOW_COUNT;
}
register caps_register{
  width: 16; instance_count : LINK_COUNT;
}
\end{lstlisting}
\caption{Header type, metadata and registers in P4.}
\label{fig.p4-header-type}
\end{figure}

In our P4 program, \ezsegway packets are processed by a set of {\em control} functions, which form the so-called {\em control flow}. In Figure~\ref{fig.p4-controls}, we describe all the control functions needed in \ezsegway. When a packet enters a switch, the \texttt{ingress} control function is the first being called. We use this control function to distinguish among the different message types. The three controls $ctrl\_inst$, $ctrl\_gtm$, and $ctrl\_rm$ are used to process $\MsgInstallUpdate$, $\MsgGoodToMove$ and $\MsgRemoving$, respectively. The control function $ctrl\_move\_flow$ ($ctrl\_rm\_flow$) is used in $ctrl\_gtm$ ($ctrl\_rm$) to move a flow to its new path (to remove the flow state from its old path).
We remind the reader that logic statements can only be used in the control flow program and they are not allowed to be used in the rest of the program. The \texttt{apply} instruction is used to redirect the execution to the table part of the P4 program.
\begin{figure}[tp]
\begin{lstlisting}[frame=single,breaklines=true,breakatwhitespace=true,prebreak=\mbox{\space\tiny$\hookleftarrow$},framexleftmargin=4pt,language=C,morekeywords={fields, if, else, header_type, header, control, apply, action},escapechar=@]

control ingress{
  if (valid(inst)){
    if (inst@.@msg_type==1){
      ctrl_inst();
    }
    else if (inst@.@msg_type==2){
      ctrl_gtm();
    }
    else if (inst@.@msg_type==3){
      ctrl_rem();
    }
  }
}

control ctrl_inst{
  apply(save_inst);
  if (inst@.@is_end_sw == 1){
    apply(read_inst);
    apply(send_gtm_inst);
  }
}

control ctrl_rm{
  if (inst_meta@.@version==inst@.@version){
    ctrl_rm_flow();
  }
}

control ctrl_gtm{
  apply(tbl_read_inst_gtm);
  if (inst_meta@.@version==inst@.@version){
    apply(tbl_wrt_gtm_recv);
    apply(tbl_read_cap);
    if (cap_meta@.@new_cap > inst@.@vol){
      ctrl_move_flow();
      apply(tbl_read_inst_id_0);
      if (id_meta@.@gtm_recv==1 and
          cap_meta@.@new_cap > inst_meta@.@vol){
        ctrl_move_flow_id_0();
      }
      [...]
      apply(tbl_read_inst_id_N);
      if (id_meta@.@gtm_recv==1 and
          cap_meta@.@new_cap > inst_meta@.@vol){
        ctrl_move_flow_id_N();
      }
    }
  }
}

control ctrl_move_flow{
  apply(tbl_move_flow);
  if (inst@.@pre_lnk@!=@-1){
    apply(tbl_send_gtm);
  }
  if (inst@.@old_link@!=@-1){
    apply(tbl_rm_flow_fr_add);
    apply(tbl_send_rm_fr_add);
  }
}

control ctrl_rm_flow{
  apply(tbl_rm_flow);
  if (inst@.@old_lnk@!=@-1){
    apply(tbl_send_rm);
  }
}
\end{lstlisting}
\caption{Controls in P4.}
\label{fig.p4-controls}
\end{figure}

We now show in Figure~\ref{fig.p4-tables}) the different table functions that are called in the control part of the P4 program. In our program, each \texttt{table} statement simply consists of a series of actions that are applied to the processed packet. According to $P4$'s convention, every table must be called only once in the whole program.

\begin{figure}[tp]
\begin{lstlisting}[frame=single,breaklines=true,breakatwhitespace=true,prebreak=\mbox{\space\tiny$\hookleftarrow$},framexleftmargin=4pt,language=C,morekeywords={fields, if, else, header_type, header, control, table, actions},escapechar=@]

table tbl_wrt_gtm_recv{
  actions{ write_gtm_recv; } 
}

table tbl_save_inst{
  actions{ save_inst; }
}

table tbl_send_gtm{
  actions{ send_gtm; }
}

table tbl_send_rm{
  actions{ send_rm; }
}

table tbl_send_rm_fr_add{
  actions{ send_rm; }
}

table tbl_send_gtm_inst{
  actions{ update_id_meta; send_gtm; }
}

table tbl_rem{
  actions{ send_rem; }
}

table tbl_read_inst{
  actions{ read_inst; }
}

table tbl_read_inst_gtm{
  actions{ read_inst; }
}

table tbl_read_cap{
  actions{ read_cap; }
}

table tbl_move_flow{
  actions{ add_flow; }
}

table tbl_rm_flow{
  actions{ remove_flow; }
}

table tbl_rm_flow_fr_add{
  actions{ remove_flow; }
}

table tbl_save_gtm_recv{
  actions{write_gtm_recv; }
}
\end{lstlisting}
\caption{All tables in P4.}
\label{fig.p4-tables}
\end{figure}

\begin{figure}[t!]
\begin{lstlisting}[frame=single,breaklines=true,breakatwhitespace=true,
 prebreak=\mbox{\space\tiny$\hookleftarrow$},framexleftmargin=4pt,language=C,
 morekeywords={fields,
	 if, else, header_type, header, control, apply, action},escapechar=@]
action save_inst(){
 vols_reg@[@inst@.@id@]=@inst@.@vol; 
 old_lnks_reg@[@inst@.@id@]=@inst@.@old_lnk;
 new_lnks_reg@[@inst@.@id@]=@inst@.@new_lnk;
 is_end_sws_reg@[@inst@.@id@]=@inst@.@is_end_sw;
 pre_lnks_reg@[@inst@.@id@]=@inst@.@pre_lnk;
 versions_reg@[@inst@.@id@]=@inst@.@version;
 gtm_recvs_reg@[@inst@.@id@]=@0;
 dones_reg@[@inst@.@id@]=@0;
}
	
action forward(port){
 standard_metadata@.@egress_spec=port@;@
}
	
action send_gtm(){
 inst@.@msg_type@=@2;
 standard_metadata@.@egress_spec=
     inst_meta@.@pre_lnk@;@
}
	
action send_rm(){
 inst@.@msg_type@=@3;
 standard_metadata@.@egress_spec=
     inst_meta@.@pre_lnk@;@
}
	
action update_id_meta(){
 id_meta@.@id@=@inst@.@id;
}
	
action read_inst(){
 inst_meta@.@vol@=@vols_reg@[@id_meta@.@id@];@   
 inst_meta@.@old_lnk@=@old_lnks_reg@[@id_meta@.@id@];@
 inst_meta@.@new_lnk@=@new_lnks_reg@[@id_meta@.@id@];@
 inst_meta@.@is_end_sw@=@is_end_sws_reg@[@id_meta@.@id@];@
 inst_meta@.@pre_lnk@=@pre_lnks_reg@[@id_meta@.@id@];@
 inst_meta@.@version@=@versions_reg@[@id_meta@.@id@];@
 id_meta@.@gtm_recv@=@gtm_recvs_reg@[@id_meta@.@id@];@
 id_meta@.@done@=@dones_reg@[@id_meta@.@id@];@
}
	
action write_gtm_recv(){
 gtm_recvs_reg@[@id_meta@.@id@]=@1;
}
	
action read_cap(){
 cap_meta@.@old_cap@=@caps_reg@[@inst@.@old_lnk@]@;
 cap_meta@.@new_cap@=@caps_reg@[@inst@.@new_lnk@]@;
}
	
action add_flow(){
 out_lnks_reg@[@id_meta@.@id@]@=inst_meta@.@new_lnk;
 caps_reg@[@inst_meta@.@new_lnk@]@=cap_meta@.@new_cap
 - inst_meta@.@vol@;@
 dones_reg@[@id_meta@.@id@]=@1;
}
	
action remove_flow(){
 out_lnks_reg@[@id_meta@.@id@]=@-1;
 caps_reg@[@inst_meta@.@old_lnk@]=@cap_meta@.@old_cap
 @+@inst_meta@.@vol@;@
 dones_reg@[@id_meta@.@id@]=@1;
}
\end{lstlisting}
\caption{Actions in P4.}
\label{fig.p4-actions}
\end{figure}
Finally, Figure~\ref{fig.p4-actions} shows all the action functions that are used to process a packet. In P4, each action consists of a sequence of read/write operations from/to the registers. For instance, the \texttt{add\_flow} action performs the following three actions: i)  it updates the new outgoing link of the processed flow, ii) it registers that a flow update operation has been performed, and iii) it updates the residual bandwidth capacity of the new outgoing link.

$P4$ does not allow the flow execution to call the same \texttt{table} instruction multiple times. As such, we generated our code in such a way that the preprocessor will generate a table for each specific flow id, as shown in Figure~\ref{fig.p4-preprocessing}.
\begin{figure}[t!]
\begin{lstlisting}[frame=single,breaklines=true,breakatwhitespace=true,prebreak=\mbox{\space\tiny$\hookleftarrow$},framexleftmargin=4pt,language=C,morekeywords={fields, if, else, header_type, header, control, apply, action, define, table, actions},escapechar=@]
#define tbl_send_gtm_id(x)\
table tbl_send_gtm_id_ ## x{\
  actions{ send_gtm; }\
}

#define tbl_read_inst_id(x)\
table tbl_read_inst_id_ ## x}\
  actions{ update_id_meta; read_inst; }\
}

#define table tbl_move_flow_id(x)\
table tbl_move_flow_id_ ## x{\
  actions{ move_flow; }\
}

#define ctrl_move_flow_id(x)\
control ctrl_move_flow_id_ ## x{\
  apply(tbl_move_flow_id_ ## x);\
  if (inst@.@pre_lnk == -1){\
    apply(tbl_send_gtm_id_ ## x);\
  }\
}
\end{lstlisting}
\caption{Preprocessing code.}
\label{fig.p4-preprocessing}
\end{figure}
}
\fi

\end{document}